\newtheorem{theorem}{Theorem}
\newtheorem{lemma}{Lemma}
\newtheorem{corollary}{Corollary}
\newtheorem{claim}{Claim}
\newenvironment{amatrix}[1]{%
  \left[\begin{array}{@{}*{#1}{c}|c@{}}
}{%
  \end{array}\right]
}
\newcommand{\bra}[1]{\left[#1\right]}
\newcommand{\pa}[1]{\left(#1\right)}
\newcommand{\set}[1]{\left\{ #1 \right\}}
\newcommand{\abs}[1]{\left|#1\right|}
\newcommand{\mc}{\mathcal}
\newcommand{\ceil}[1]{\left \lceil #1 \right \rceil}
\title{Robust Restaking Networks}
\author{Naveen Durvasula\thanks{Columbia University. Email: \texttt{nkd2126@columbia.edu}.} \and Tim Roughgarden\thanks{Columbia University \& a16z crypto. Email: \texttt{tim.roughgarden@gmail.com}.}}
\date{\today}
\begin{document}
\maketitle

\begin{abstract}
We study the risks of validator reuse across multiple services in a
restaking protocol. We characterize the robust security of a restaking
network as a function of the buffer between the costs and profits from
attacks.  For example, our results imply that if attack costs
always exceed attack profits by 10\%, then a sudden loss of .1\% of
the overall stake (e.g., due to a software error) cannot result in the
ultimate loss of more than 1.1\% of the overall stake.
We also provide local analogs of these overcollateralization conditions
and robust security guarantees that apply specifically for a target
service or coalition of services. All of our bounds on worst-case
stake loss are the best possible. Finally, we bound the
maximum-possible length of a cascade of attacks.

Our results suggest measures of robustness that could be exposed to
the participants in a restaking protocol. We also suggest
polynomial-time computable sufficient conditions that can proxy for
these measures.
\end{abstract}

\section{Introduction}

\subsection{Sharing Validators Across Services}

Major blockchain protocols such as Bitcoin or Ethereum are
``decentralized,'' meaning that transaction execution is carried
out by a large and diverse set of ``validators.'' 
Such protocols offer a form of ``trusted computation,'' in the sense
that, because they are decentralized, no one individual or entity can
easily interfere with their execution. A decentralized and
Turing-complete smart contract platform such as Ethereum can then be
viewed as a trusted programmable computer capable of performing
arbitrary computations.

While Turing-complete, the computing functionality offered by Ethereum
smart contracts suffers from limitations imposed by design decisions
in the underlying consensus protocol. Most obviously, computation and
storage in the Ethereum virtual machine is scarce, with perhaps 15--20
transactions processed per second.  Could the Ethereum protocol be
somehow bypassed, opening the door for different or more powerful
computing functionality, while retaining at least some of the
protocol's decentralization?  Or, what about applications that are not
compatible with all Ethereum validators, perhaps due to demanding
hardware requirements or regulatory constraints?

One natural approach to addressing these challenges is to allow the
{\em reuse} of a blockchain protocol's validators across multiple
services, where a ``service'' is some task that could be carried out
by some subset of validators. (The initial blockchain protocol can be viewed as the
canonical service, performed by all validators.)
For example, such services
could include alternative consensus protocols
(perhaps with higher throughput, a different virtual machine, or
different consistency-liveness trade-offs), storage (``data
availability''), or verifiable off-chain computation (``zk
coprocessors'').\footnote{%
Several projects, in various stages of production, are currently
exploring this idea; the Eigenlayer restaking protocol~\cite{el}
is perhaps the most well known of them. We stress that our goal here
is to develop a model that isolates some of the fundamental challenges
and risks of validator reuse, independent of any specific
implementation of the idea.}

The obvious danger of validator reuse is an increased risk of
a validator deviating from its intended behavior (e.g., due to overwhelming computational
responsibilities). Our focus here is deliberate validator
deviations in response to economic incentives, such as the profits that could
be obtained by corrupting one or more services. The goal of this paper is to
quantify such risks:
\begin{quote}
Under what conditions can validators be safely reused across
multiple services?
\end{quote}

\subsection{Cryptoeconomic Security}

We first review the usual ``cryptoeconomic'' approach to answering a
more basic question: when is a blockchain protocol, without any
additional services, ``safe from attack''?  The idea is to perform a
cost-benefit analysis from the perspective of an attacker, and declare
the protocol safe if the cost of carrying out an attack
exceeds the profit that the attacker can expect from it; see also Figure~\ref{f:star}.
In the
specific case of a proof-of-stake blockchain protocol with slashing
(such as Ethereum), the cost can be estimated as the value of the
validator stake that would be lost to slashing following an attack. 
For example, let~$V$ denote the set of validators of a proof-of-stake
blockchain protocol and $\sigma_v$ the stake of validator~$v \in V$
(e.g., 32 ETH in Ethereum). In a typical PBFT-type protocol in which
double-voting validators lose all of their stake, the 
cost of an attack (i.e., causing a consistency violation) can be
bounded below by $\tfrac{1}{3} \sum_{v \in V} \sigma_v$. In this case,
the protocol can be regarded as cryptoeconomically secure provided
the estimated profit~$\pi$ of an attack is less than this quantity.
To the extent that there is a ``buffer'' between~$\tfrac{1}{3} \sum_{v
  \in V} \sigma_v$ and~$\pi$, the protocol can be treated as
``robustly secure,'' meaning secure even after a sudden loss of some
amount of stake (e.g., due to slashing caused by a software error).

\begin{figure}[h]
\centering
\includegraphics[width=.425\textwidth]{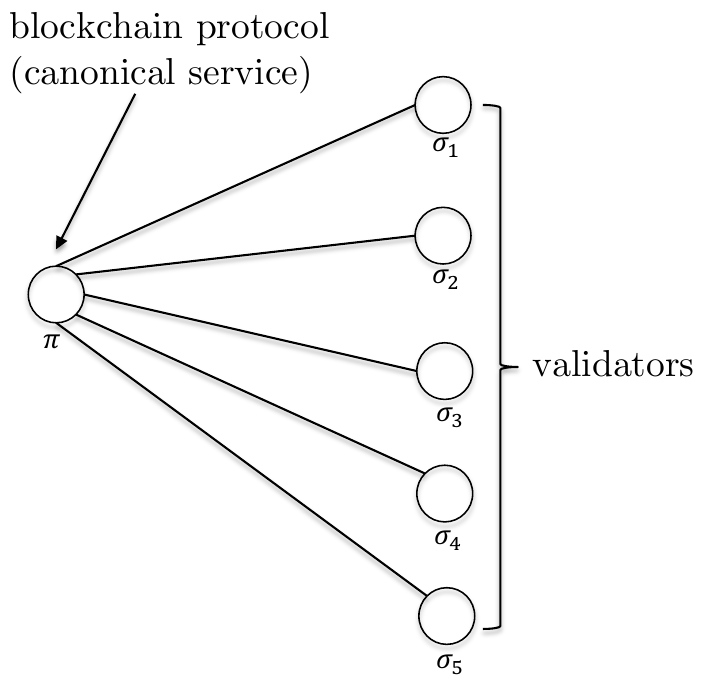}
\caption{A blockchain protocol operated by a collection of validators,
  with~$\pi$ denoting the profit of successfully attacking the
  protocol and~$\sigma_v$ the amount of stake posted by a
  validator~$v$ as collateral.}\label{f:star}
\end{figure}

\subsection{Our Results: Robustly Secure Validator Reuse}\label{ss:global}

Viewing the basic scenario of a blockchain protocol as a star graph
(with one ``service'' representing the protocol connected to all the
validators running it), the more complex scenario of validators reused
across multiple services can be viewed as an arbitrary bipartite
graph (Figure~\ref{f:bipartite}). As before, we suppose that each
validator~$v \in V$ has some
stake~$\sigma_v$ that can be confiscated in the event that~$v$
participates in an attack on a service that it has agreed to
support. There is now a set~$S$ of services, with~$\pi_s$ denoting the
profit an attacker would obtain by compromising the service~$s \in
S$.\footnote{We follow~\cite{el} and assume that the estimates on
  attack profitability (the $\pi_s$'s) are given. Developing tools to
  help produce such estimates in practice is an important open
  research direction.}
We assume that compromising a service~$s \in S$ requires the
participation of an~$\alpha_s$ fraction of the overall validator stake
supporting~$v$ (e.g., $\alpha_s = 1/3$).
With this expanded network formalism to capture multiple services,
when should we consider a network to be ``secure''?

\begin{figure}[h]
\centering
\includegraphics[width=.5\textwidth]{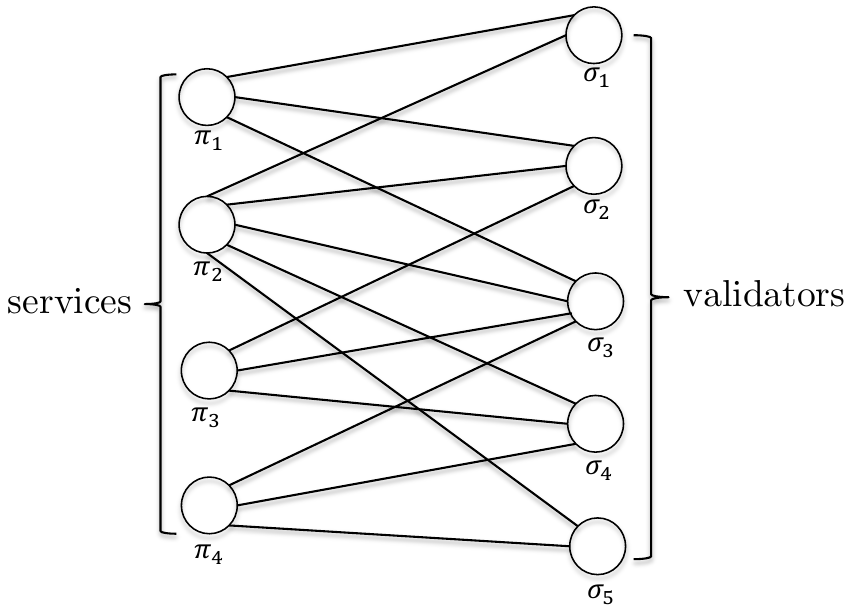}
\caption{A general restaking network, with validators reused across
  multiple services.}\label{f:bipartite}
\end{figure}

Intuitively, a network is insecure if there is a set of services that
can be profitably attacked. This requires two conditions to be
satisfied: the validators~$B \subseteq V$ carrying out the attack must
control sufficient stake to do so (for each service~$s$ in the
attacked set~$A$, the validators of~$B$ control at least a $\alpha_s$
fraction of the staked pledged to~$s$), and they must profit from it
(i.e., $\sum_{s \in A} \pi_s > \sum_{v \in B} \sigma_v$).
We call such a pair~$(A,B)$ a {\em valid attack} and a network {\em
  secure} if there are no valid attacks; see also Figure~\ref{f:secure}.
With multiple services,
security is an inherently combinatorial (as opposed to binary)
notion. For example, the computational problem of checking whether a network is secure is 
as hard as the (coNP-hard) problem of verifying the expansion of a bipartite graph (see~\cite{khot2016hardness}).

\begin{figure}
\centering
\subfigure[A valid attack]{\includegraphics[width=.4\textwidth]{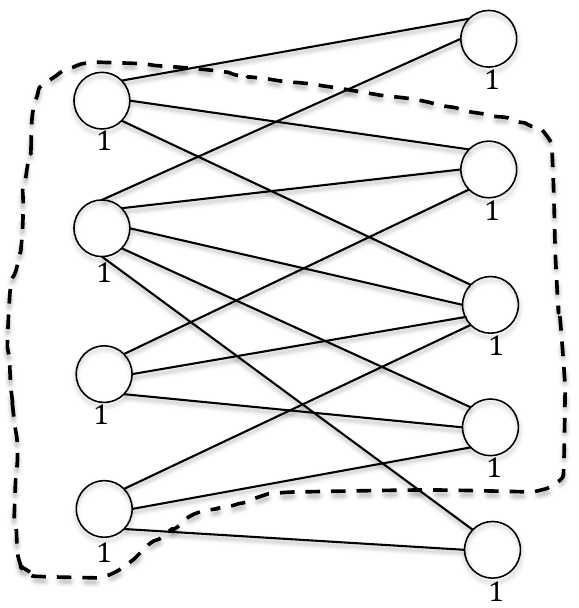}}\qquad
\subfigure[A secure network]{\includegraphics[width=.4\textwidth]{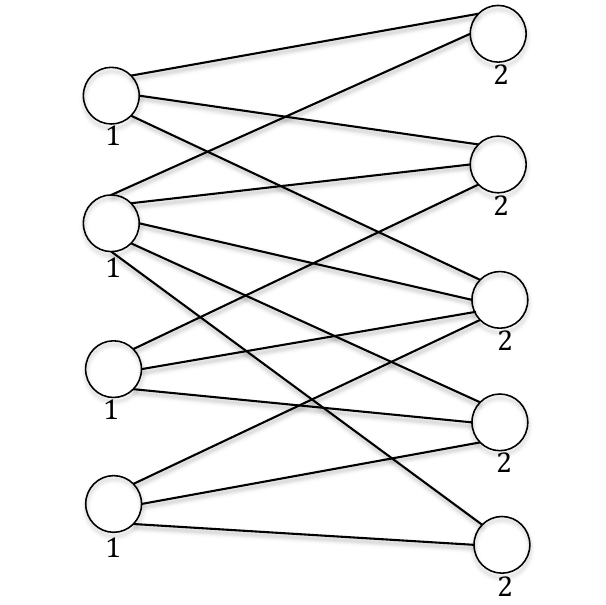}}
\caption{Two restaking networks. Each service (left-hand side vertex) and validator (right-hand
  side vertex) is labeled with its profit-from corruption or stake,
  respectively. 
Assume that a service can be corrupted if and only if it is attacked
by at least half of its validators (i.e., $\alpha_s = 1/2$ for every
  service~$s$). The restaking network in~(a) is not secure because
  there is a valid attack (indicated by the dotted line): three
  validators can earn a profit of~4 by
  corrupting all four services while losing only three units of stake.
The restaking network in~(b) is secure.}\label{f:secure}
\end{figure}

When is a network ``robustly secure,'' in the sense that the sudden
loss of a small amount of stake cannot enable a catastrophic attack?
Unlike the ``all-or-nothing'' version of this question with a single
service, with multiple services, the following more fine-grained question is the appropriate one: given an initial shock in the
form of a sudden loss of a $\psi$ fraction of the overall stake, what
is the total fraction of stake that might be lost following any
consequent valid attacks?

As in the case of a single service, some amount of ``buffer'' in stake
(relative to attack profits) is necessary for robust security. We
parameterize this overcollateralization factor via a parameter~$\gamma$ and suppose that,
whenever~$B \subseteq V$ is a subset of validators capable of
corrupting all the services in $A \subseteq S$, the total stake
of~$B$ is at least a $1+\gamma$ factor larger than the total profit from
corrupting all of~$A$. For example, this condition holds in the
network in Figure~\ref{f:secure}(b) for~$\gamma=1/2$ (but not for larger
values of~$\gamma$).

Our first main result (\cref{thm:slack}) precisely characterizes the
worst-case (over bipartite graphs and shocks, as a function
of~$\gamma$) fraction of the overall
stake that can be lost due to a shock
of size $\psi$: $\pa{1+\tfrac{1}{\gamma}}\psi$. Because the network
was secure prior to the shock, the value of a
$\pa{1+\tfrac{1}{\gamma}}\psi$ fraction of the overall stake is also
an upper bound on the total profit obtained from all of the attacked
services.

We also show that our result is tight
in a strong sense (\cref{thm:gn}, \cref{thm:noslack}, and
\cref{thm:localtight2}): for every $\psi$, $\gamma$, and $\epsilon$
greater than zero such that $0 \le \pa{1 + \frac1\gamma}\psi - \epsilon \le 1$,
there exists a restaking graph in which a $\psi$
fraction of the overall stake can disappear in a shock that results in
the loss of at least a $\pa{1 + \frac1\gamma}\psi - \epsilon$ fraction
of the overall stake.\footnote{After slightly reducing the validator
  stakes, the network
  in Figure~\ref{f:secure}(b) already shows that the bound is tight for
  the special case in which $\psi = 1/5$ and $\gamma$ is arbitrarily
  close to $1/2$. (Consider a shock that knocks out the validator that
is connected to all four services.)}

Qualitatively, this result implies that a constant-factor
strengthening of the obvious necessary condition for security
automatically implies robust security. For example, if attack costs
always exceed attack profits by 10\%, then a sudden loss of .1\% of
the overall stake cannot result in the ultimate loss of more than
1.1\% of the overall stake. 

Our result suggests a ``risk measure'' that could be exposed to the
participants in a restaking protocol, namely the maximum value of the
buffer parameter~$\gamma$ that holds with respect to the current
restaking network. We also suggest easy-to-check
sufficient conditions that can proxy for this risk measure (\cref{cor:ELslack}). These conditions are similarly tight, as shown in \cref{thm:gn}, \cref{thm:noslack}, and  \cref{thm:localtight2}.

\subsection{Our Results: Local Robust Security Guarantees}\label{ss:local}

\newcommand{\sse}{\subseteq}

The results described in Section~\ref{ss:global} are ``global'' with
respect to the network structure, in three distinct senses: (i) the
overcollateralization condition is assumed to hold for every
subset~$B \sse V$ of validators and~$A \sse S$ of services that~$B$ is
capable of corrupting; (ii) the initial shock can affect any subset of
validators, subject to the assumed bound of $\psi$ on the total
fraction of stake lost; and (iii) any subset of validators might lose
stake following the initial shock, subject to our upper bound of
$(1+\tfrac{1}{\gamma})\psi$ on the total fraction of lost stake.

\vspace{-.1in}

\paragraph{Local guarantees.}
We next pursue more general ``local'' guarantees, which are
parameterized by a set~$C$ of services. (The global guarantees will
correspond to the special case in which~$C=S$.)
For example, $C$ might be a set of closely related services that share
a number of dedicated validators. The operators of such a set~$C$
might object to both the assumptions and the conclusion of our global
guarantee:
\begin{itemize}

\item How can we be sure that the overcollateralization factor holds for
  services and validators that we know nothing about?

\item And even if we could, 
how can we be sure that random validators that we have nothing
  to do with won't suddenly lose their stake (e.g., because they
  supported a malicious or buggy service), resulting in an initial
  shock the causes the loss of more than a $\psi$ fraction of the
  overall stake?

\item And even if we could, how can we be sure that our validators
  won't be the ones that lose their stake following a shock that is
  purely the fault of other services and/or validators?

\end{itemize}

To address these concerns, we next consider a refined version of the
basic model, parameterized by a set~$C$ of services.  We denote
by~$\Gamma(C)$ the validators that are exclusive to~$C$, meaning that
they contribute to no services outside~$C$. (In the special case in
which~$C=S$, $\Gamma(C)=V$ and we recover the original model.)
Intuitively, the validators in~$\Gamma(C)$ are the ones that services
in~$C$ are ``counting on''; other validators support (potentially
malicious or buggy) services outside~$C$ and, from $C$'s perspective,
could disappear at any time.  We then restrict attention to initial
shocks in which at most a~$\psi$ fraction of the total stake
controlled by the validators of~$\Gamma(C)$ is lost. (The shock can affect
validators outside of~$\Gamma(C)$ arbitrarily.)  The goal is then to
identify overcollateralization conditions guaranteeing that, no matter
what the initial shock and subsequent attacks, the fraction of stake
ultimately lost by the validators in~$\Gamma(C)$ is bounded (by a
function of the shock size~$\psi$ and an overcollateralization
parameter~$\gamma$).

Generalizing our global guarantees to local guarantees is not
straightforward and, as the next two examples show, requires
additional compromises. The first example shows that protection can be
guaranteed only against a subset of valid attacks, a natural and well
motivated subset that we call ``stable attacks.'' The second example
shows that, even when restricting attention to stable attacks,
overcollateralization is required not only for potential attacks, but
more generally for what we call ``attack headers.''

\begin{figure}[ht]
    \centering
    \begin{minipage}{.5\textwidth}
    \centering
    \includegraphics[scale=0.25]{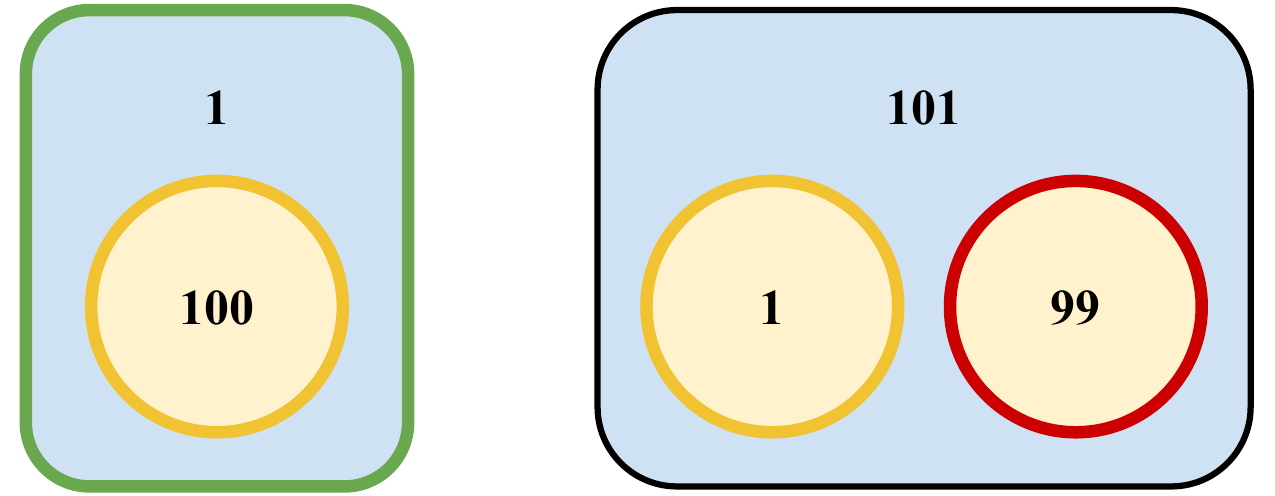}
    \end{minipage}\begin{minipage}{.5\textwidth}
    \centering
    \includegraphics[scale=0.25]{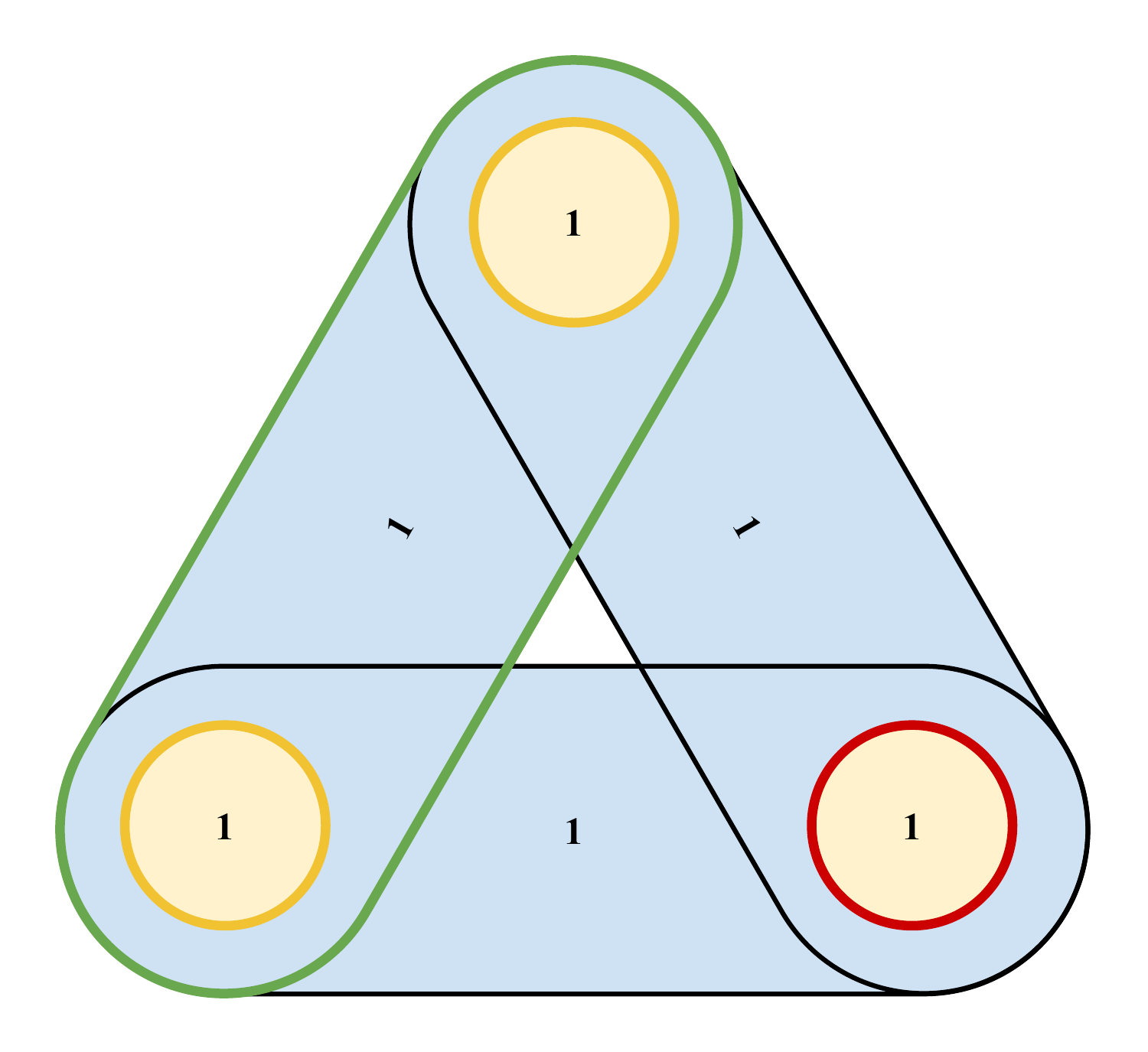}
    \end{minipage}
    \caption{\textbf{Simple overcollateralization is insufficient in
        the local setting.} There are two restaking networks shown
      above. In each, the validators are denoted along with their
      corresponding stakes by the yellow circles. Services and their
      profits from corruption are denoted by the blue rounded
      squares. In each of these networks, the service outlined in
      green is overcollateralized. However, despite being unrelated to
      the overcollateralized service, if the validator outlined in red
      disappears, the validators in yellow can attack all services
      (including the overcollateralized one).}
    \label{fig:localdemo}
\end{figure}

\vspace{-.1in}

\paragraph{Necessity of restricting to stable attacks.}
In more detail, in the first example (depicted on the left of
\cref{fig:localdemo}), we
consider a restaking network with two services and three
validators. (Vertices in this figure correspond to validators, which
were previously represented as the right-hand side vertices of a
bipartite graph; each service is now identified with its neighborhood
of validators.) 
The service highlighted in green on the left has $100$
times as much stake securing it as its profit from corruption, and
shares no validators with other services, so one might hope that it
would be protected from any shocks that affect only the other
validators.
However, if the validator
outlined in red disappears, the two validators outlined in yellow
can attack both services for
a profit of $102-101 > 0$.
But this example is unsatisfying: the validator with stake $1$ could
have attacked the service with profit from corruption $101$ on its own
to yield a profit of $100$. The addition of the validator with stake
$100$ added $100$ to the cost of the attack, but only added $1$ (from
the corruption of the service highlighted in green) to the profit. We
show in \cref{thm:localimpossible} that the issue suggested by this
example is fundamental: without further restrictions on attacks that
rule out contrived examples such as this one, there does not exist a
local condition that guarantees local security.
In response, we confine
attention to sequences of what we call {\em stable} attacks in which
all of the attacking validators contribute positively to the profit of
the attack (as opposed to free riding on the profits attributable to
other attacking validators). The attack in the example is not
stable, as the validator with stake $100$ was not a profitable
addition to the attack that could have been carried out by the
validator with stake $1$.

\vspace{-.1in}

\paragraph{Necessity of overcollateralizing attack headers.}
Even if we restrict our attention to stable attacks, simple
overcollateralization is insufficient to guarantee local security. To
appreciate the issue, consider the restaking network shown on the
right in \cref{fig:localdemo}. In this example, there are three
services, each with a profit from corruption of $1$, and three
services each with stake $1$. Each validator is used to secure two
different services. The service outlined in green is
overcollateralized in that its profit from corruption is $1$, but two
units of stake stake are securing it. Despite this, if the unrelated
validator outlined in red disappears, the two validators outlined in
yellow can attack all three services for a profit of $3-2 > 0$.
Furthermore, if all validators are required to attack each service
(i.e., $\alpha_s=1$), then this attack is stable, as the inclusion
of both validators in the attack is profitable. Thus, despite being
overcollateralized, a stable attack can be launched on the service
highlighted in green even if stake unrelated to the service disappears
in a shock.

We find that to guarantee robust security for a coalition of services
$C$ using only ``local'' information, it is necessary to
overcollateralize not only pairs~$(A,B)$ where~$B$ is a set of
validators capable of attacking all services in~$A$, but a more
general collection of pairs that we call {\em attack
  headers}. Informally, this amounts to requiring that there is some
``buffer'' in stake for any potential attack on some services in $C$
even if we were to allow every validator that is also securing a
service outside of $C$ to join the attack without considering their
profitability. Our main result here (\cref{thm:local}) formally
provides a local condition guaranteeing that, whenever an initial
shock knocks out at most a $\psi$ fraction of the stake that provides
security exclusively to~$C$, the worst-case loss of such stake, after
an arbitrary sequence of stable attacks, is at most a
$(1+\tfrac{1}{\gamma})\psi$ fraction. We show that our bounds are
tight and indeed require overcollateralization of all attack headers
(\cref{cor:localtight}, \cref{thm:globalinsufficient}, and
\cref{thm:localtight2}), and again provide easily computable
sufficient (and similarly tight) conditions that can proxy for the
overcollateralization condition (\cref{cor:local}).  Our local condition
generalizes the global overcollateralization condition, with the latter
corresponding to the special case of the former in which~$C=S$.

\subsection{Our Results: Cascading Attacks}

An attack on a restaking network results in a loss of stake (of the
attacking validators), and this may introduce new opportunities for
other sets of validators to carry out profitable attacks. That is, an
initial shock may set off an entire {\em cascade} of attacks. (All of
the bounds on stake loss described in Sections~\ref{ss:global}
and~\ref{ss:local} hold for cascades of attacks of arbitrary length.)
Our final result concerns the maximum-possible length of such a
cascade, and shows that this quantity is also governed in part by the
overcollateralization factor~$\gamma$.

Precisely, we define the {\em reference depth} of a cascade of attacks
as a measure of the ``long-range dependence'' between different
attacks in an attack sequence. For example, if each attack in the
sequence is directly enabled by the loss of the validators slashed in
the previous attack, then the reference depth of the sequence
is~1. Our main result here (\cref{thm:length}) bounds the
maximum-possible attack length as a function of the reference
depth~$k$, the shock size~$\psi$, the overcollateralization
factor~$\gamma$, and the minimum stake $\epsilon$ held by a validator:
$k ( 1 + \log_{1 + \gamma}(\frac{\psi \cdot [\text{total
    stake}]}{\epsilon \gamma}))$.
For example, in the case of constant reference depth and equal
validator stake amounts, the worst-case attack length is logarithmic
in the number of validators with overcollateralization and linear
without it.

\subsection{Related Work}

Our focus on the risks of cascading failures following a small shock
echoes some of themes in the literature on systemic risk in financial
networks. For example, Eisenberg and Noe~\cite{eisenberg2001systemic}
that study the existence and structure of inter-firm payments in a
financial network following a default.  This work is extended by
Glasserman and Young~\cite{glasserman2016contagion}, who study how
``connectedness,'' meaning the fraction of liabilities that a firm
externally owes, affects contagion risk.  Acemoglu et
al.~\cite{acemoglu2015systemic} build further on this work and study
network ``stability,'' meaning the propensity for shocks to propagate;
they show that connectivity initially improves stability, but then at
a phase transition, denser connectivity leads to increased shock
propagation.  In a subsequent paper, Acemoglu et
al.~\cite{acemoglu2015networks} unify a number of the preceding
results. 

A separate line of work, beginning with Chen et
al.~\cite{chen2013axiomatic}, aims to axiomatically characterize
systemic risk measures.  The model in~\cite{chen2013axiomatic} can
capture, in particular, a contagion model characterized by a matrix of
profits and losses over different firms and outcomes. The results
in~\cite{chen2013axiomatic} characterize the global measures of risk
(operating on the matrix) that satisfies certain sets of desirable
axioms.  This is work is expanded upon in Kromer et
al.~\cite{kromer2016systemic}, where the authors consider general
outcome measure spaces, as well as in Feinstein et
al.~\cite{feinstein2017measures}, where the authors consider
set-valued risk measures.  Battison et al.\cite{battiston2016price}
study systemic risk measurement when a regulator has limited
information about contracts made between financial network
participants (e.g., the fraction of the face value that can likely be
recovered if a counterparty defaults), and show how small errors in
knowledge can lead to large errors in systemic risk measurement.

Our work also shares some conceptual similarity with the well-known
work of Diamond and Dybvig~\cite{diamond1983bank} on bank runs and
of Brunnermeier et al.~\cite{brunnermeier2012risk} on the risks of re-hypothecation.

The model in the present work differs substantially from those
considered in the aforementioned papers, in large part because of the
idiosyncrasies of restaking networks, including their combinatorial
and bipartite nature and their susceptibility to economically
motivated attacks.

Restaking has also, to a limited extent, been studied in its own
right. The EigenLayer team introduces restaking in \cite{el}. The
framing of an economically motivated attacker that trades off stake
loss with profits from corruption appears in \cite{el}, and is used
also by Deb et al.~\cite{deb2024stakesure}.  Chitra and
Neuder~\cite{blogpost} discuss restaking risk from a validator
perspective, comparing restaking with investments in other financial
instruments (e.g. bonds).  Alexander~\cite{alexander2024leveraged}
considers the interplay between existing leveraging schemes and liquid
restaking tokens, which can amplify large-scale credit risk.

\section{Model}\label{sec:model}

\paragraph{Validators and Services.} We consider a setting in which there is a set $V$ of validators and a set $S$ of services. Each service $s \in S$ has some profit from corruption $\pi_s$, and each validator $v \in V$ has some stake $\sigma_v$. We also associate with each service $s$ a parameter $\alpha_s$ that denotes the fraction of stake required to corrupt/launch an attack on $s$. We call a bipartite graph $G = (S, V, E, \pi, \sigma, \alpha)$ a \textit{restaking graph}; an edge is drawn between a validator $v \in V$ and a service $s \in S$ if $v$ is restaking for $s$. For a given set of vertices $A$ in a graph $G$, we use the notation $N_G(A)$ to denote the neighbors of $A$. 

\paragraph{Attack Dynamics.} For simplicity, we assume that validators lose their full stake $\sigma_v$ if they launch an attack on a service. As such, for a given collection of services $A \subseteq S$, and a given collection of validators $B \subseteq V$ restaking for those services, we say that $(A, B)$ is an \textit{attacking coalition} for a restaking graph $G$ if the validators in $B$ possess enough stake to corrupt the services $A$:
\begin{align}
    \underbrace{\sum_{v \in B \cap N_G(\set{s})} \sigma_v}_{\text{Total stake in $s$ owned by validators $B$}} \ge \alpha_s \cdot \underbrace{\sum_{v \in N_G(\set{s})}\sigma_v}_{\text{Total amount restaked in $s$}} && \forall s \in A\label{eq:threat}
\end{align}
We further say that $(A, B)$ is a \textit{valid attack} if it is an attacking coalition that has an incentive to launch an attack:
\begin{equation}
    \underbrace{\sum_{s \in A} \pi_s}_{\text{Total profit from corrupting $A$}} > \underbrace{\sum_{v \in B} \sigma_v}_{\text{Total stake owned by validators $B$}}\label{eq:balance}
\end{equation}
If a valid attack $(A, B)$ is carried out, we denote by $G \searrow B$ the induced subgraph $G\bra{S, V \setminus B}$. The graph $G \searrow B$ denotes the state of the restaking graph after the attack is carried out. If no valid attacks exist on the graph, we call it \textit{secure}. To simplify notation, for any subset of validators $B \subseteq V$, we will use the shorthand $\sigma_B$ to denote $\sum_{v \in B} \sigma_v$. Similarly, for any $A \subseteq S$, we will use $\pi_A$ as shorthand for $\sum_{s \in A}\pi_s$.

\paragraph{EigenLayer sufficient conditions.} We note in passing that, in their whitepaper \cite{el}, EigenLayer proposes some efficiently verifiable sufficient conditions for security that they check to ensure that an attack does not exist.
\begin{claim}[EigenLayer Sufficient Conditions, from Appendix B.1 of the EigenLayer Whitepaper \cite{el}]\label{claim:EL}
A restaking graph $G$ is secure if for each validator $v \in V$,
\begin{equation}\label{eq:EL}
    \sum_{s \in N_G\set{v}} \frac{\sigma_v }{\sigma_{N_G\pa{\set{s}}}} \cdot \frac{\pi_s}{\alpha_s} \le \sigma_v 
\end{equation}
\end{claim}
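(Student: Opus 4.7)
The plan is to prove the contrapositive by double counting: assume that a valid attack $(A,B)$ exists on $G$ and derive a violation of the hypothesis for at least one validator. The inequality~\eqref{eq:EL} is most naturally viewed as an ``allocation'' of each service's profit-from-corruption $\pi_s$ across the validators securing~$s$, in proportion to the stake they contribute. I expect the whole argument to be a two-line swap of summation followed by the attacking-coalition constraint, with no real obstacle beyond keeping the inequalities pointed the right way.

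First, I would rewrite the hypothesis by multiplying through by $1/\sigma_v$, so that~\eqref{eq:EL} becomes the per-validator bound
\[
\sum_{s \in N_G(\{v\})} \frac{1}{\sigma_{N_G(\{s\})}} \cdot \frac{\pi_s}{\alpha_s} \le 1
\qquad \text{for every } v \in V.
\]
Next, suppose $(A,B)$ is a valid attack. For each $s \in A$, the attacking-coalition condition~\eqref{eq:threat} gives $\sum_{v \in B \cap N_G(\{s\})} \sigma_v \ge \alpha_s \sigma_{N_G(\{s\})}$, which rearranges to
\[
\pi_s \;\le\; \sum_{v \in B \cap N_G(\{s\})} \frac{\sigma_v}{\sigma_{N_G(\{s\})}} \cdot \frac{\pi_s}{\alpha_s}.
\]

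Summing over $s \in A$ and swapping the order of summation so that $v \in B$ is on the outside (using that $s \in N_G(\{v\})$ whenever $v \in N_G(\{s\})$), I get
\[
\pi_A \;\le\; \sum_{v \in B} \sum_{s \in A \cap N_G(\{v\})} \frac{\sigma_v}{\sigma_{N_G(\{s\})}} \cdot \frac{\pi_s}{\alpha_s} \;\le\; \sum_{v \in B} \sigma_v \cdot \sum_{s \in N_G(\{v\})} \frac{1}{\sigma_{N_G(\{s\})}} \cdot \frac{\pi_s}{\alpha_s}.
\]
The inner sum over $s \in N_G(\{v\})$ is at most $1$ by the rewritten hypothesis, so the right-hand side is at most $\sigma_B$, yielding $\pi_A \le \sigma_B$. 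This directly contradicts the profitability condition~\eqref{eq:balance}, so no valid attack can exist, i.e., $G$ is secure. The only subtle point to flag is that enlarging the inner sum from $A \cap N_G(\{v\})$ to all of $N_G(\{v\})$ is valid because every summand is nonnegative.
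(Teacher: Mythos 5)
Your proof is correct. One thing to note about the comparison: the paper does not actually prove this claim at all---its ``proof'' is a one-line pointer to Appendix B.1 of the EigenLayer whitepaper---so your write-up supplies a self-contained justification where the paper defers to a citation, and the argument you give is the natural proportional-charging/double-counting one (presumably essentially the whitepaper's): for a valid attack $(A,B)$, condition \eqref{eq:threat} bounds each $\pi_s$, $s \in A$, by the shares $\frac{\sigma_v}{\sigma_{N_G\pa{\set{s}}}}\cdot\frac{\pi_s}{\alpha_s}$ attributed to the attacking validators $v \in B \cap N_G\set{s}$, and swapping the order of summation together with \eqref{eq:EL} yields $\pi_A \le \sigma_B$, contradicting \eqref{eq:balance}. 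The steps all go through; the only hypotheses you use implicitly (and might state explicitly) are the standing nonnegativity/positivity assumptions $\pi_s \ge 0$, $\alpha_s > 0$, and $\sigma_{N_G\pa{\set{s}}} > 0$ --- nonnegativity of $\pi_s$ is needed both to multiply the rearranged \eqref{eq:threat} by $\pi_s$ and to enlarge the inner sum from $A \cap N_G\set{v}$ to $N_G\set{v}$, while the other two make the quantities in \eqref{eq:EL} well defined. Also, dividing \eqref{eq:EL} through by $\sigma_v$ silently assumes $\sigma_v > 0$; this normalization is unnecessary, since after the summation swap you can apply \eqref{eq:EL} exactly as stated.
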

\begin{proof}
It is shown in Appendix B.1 of the whitepaper that if \cref{eq:EL} holds for $G$, then the graph is secure (i.e. no valid attacks $(A,B)$ satisfying \cref{eq:threat,eq:balance} exist).
\end{proof}

\paragraph{Cascading attacks.} Our goal is to understand when a small shock can result in the loss of a large fraction of the overall stake. Small shocks can turn into large shocks by means of a cascading attack. Formally, we say that a disjoint sequence $(A_1, B_1), \dots, (A_T, B_T) \in 2^S \times 2^V$ is a {\em valid cascade of attacks} on a restaking graph $G = (S, V, E, \pi, \sigma, \alpha)$ if for each $t \in [T]$, $(A_t, B_t)$ is a valid attack on $G \searrow \bigcup_{i = 1}^{t-1}  B_i$. We denote by $\mc C(G)$ the set of all such sequences of valid cascading attacks. 

\paragraph{Worst-case stake loss.} We now define a metric that measures the total potential loss of stake due to a sequence of cascading attacks. In our model, we first suppose that an initial small shock decreases the amount of stake. Formally, we define, for a given restaking graph $G = (S, V, E, \pi, \sigma, \alpha)$,

\begin{equation}
    \mathbb D_\psi(G) := \set{D \subseteq V \mid \frac{\sigma_D}{\sigma_V} \le \psi}
\end{equation}
to be the set of validator coalitions that constitute at most a $\psi$-fraction of all stake. Given some $D \in \mathbb D_\psi(G)$, we use the notation $G \searrow D := G\bra{S, V \setminus D}$ to denote the induced subgraph of the restaking graph when we delete the validators $D$.
We now define 
\begin{equation}\label{eq:multiplier}
R_\psi(G) := \underbrace{\psi}_{\text{Initial shock}} + \max_{D \in \mathbb D_\psi(G)} \max_{(A_1, B_1), \dots, (A_T, B_T) \in \mc C(G\searrow D)} \underbrace{\frac{\sigma_{\bigcup_{t = 1}^T B_t}}{\sigma_V}}_{\text{Stake lost from cascading attacks}}
\end{equation}

This quantity represents the worst-case total fraction of stake lost due to an initial $\psi$-fraction of the stake disappearing. By construction, $\psi \le R_\psi(G) \le 1$.

\section{Overcollateralization Provides Robust Security}\label{sec:global_ub}

In this section, we show that ``scaling up'' the definition of security automatically results in robust security, meaning bounded losses from cascading attacks that follow an initial shock. We first show that, without loss of generality, it suffices to consider single valid attacks $(A, B) \in \mc C(G\searrow D)$ instead of more general cascading attacks.

\begin{lemma}\label{lem:acombine}
    Let $G = (S, V, E, \pi, \sigma, \alpha)$ be an arbitrary restaking graph, and further suppose that $(A, B)$ is an attacking coalition on $G \searrow D$, where $D \subseteq V$. Then, $(A, B \cup D)$ is an attacking coalition on $G$. 
\end{lemma}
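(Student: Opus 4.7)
The plan is to unpack the attacking-coalition condition \eqref{eq:threat} for $(A,B)$ in $G \searrow D$ service-by-service, then verify the same condition for $(A, B \cup D)$ in $G$ by accounting for exactly what the operation $G \searrow D$ changes in the neighborhood sums. For each $s \in A$, I would write $N_G(\{s\}) = (N_G(\{s\}) \cap D) \sqcup (N_G(\{s\}) \setminus D)$, and note that $N_{G \searrow D}(\{s\}) = N_G(\{s\}) \setminus D$ since $G \searrow D$ is just the induced subgraph on $S \cup (V \setminus D)$.

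Next I would use the hypothesis that $(A, B)$ is an attacking coalition in $G \searrow D$, which (since we may assume $B \subseteq V \setminus D$) gives
\[
\sum_{v \in B \cap N_G(\{s\}) \setminus D} \sigma_v \;\ge\; \alpha_s \Bigl(\sigma_{N_G(\{s\})} - \sigma_{N_G(\{s\}) \cap D}\Bigr)
\]
for every $s \in A$. Adding the quantity $\sigma_{N_G(\{s\}) \cap D}$ to both sides, the left-hand side becomes $\sum_{v \in (B \cup D) \cap N_G(\{s\})} \sigma_v$ (using $B \cap D = \emptyset$), and the right-hand side becomes $\alpha_s \sigma_{N_G(\{s\})} + (1-\alpha_s)\sigma_{N_G(\{s\}) \cap D}$, which is at least $\alpha_s \sigma_{N_G(\{s\})}$ because $\alpha_s \le 1$. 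This is precisely \eqref{eq:threat} for $(A, B \cup D)$ in $G$.

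There is no real obstacle here; the only mildly subtle point is making sure one recognizes that $B$ is disjoint from $D$ (so that taking the union does not double-count any stake) and that $\alpha_s \le 1$ is what licenses discarding the extra nonnegative term $(1-\alpha_s)\sigma_{N_G(\{s\}) \cap D}$. Intuitively, moving from $G \searrow D$ back to $G$ enlarges each neighborhood by the validators in $D$, which inflates the threshold by $\alpha_s \sigma_{N_G(\{s\}) \cap D}$; by bringing those same validators into the attacking set, we inflate the contributed stake by the full $\sigma_{N_G(\{s\}) \cap D}$, which more than compensates. I would keep the write-up to a single short paragraph per service, then observe that the argument is uniform in $s \in A$.
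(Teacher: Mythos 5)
Your proposal is correct and follows essentially the same route as the paper's proof: decompose $N_G(\{s\})$ into the part inside and outside $D$, add $\sigma_{D \cap N_G(\{s\})}$ to both sides of the coalition inequality in $G \searrow D$, and use $\alpha_s \le 1$ to absorb the leftover $(1-\alpha_s)\sigma_{D \cap N_G(\{s\})} \ge 0$ term. The paper's write-up is just a slight rearrangement of the same algebra, and it, too, implicitly uses $B \subseteq V \setminus D$ and $\alpha_s \le 1$.
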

\begin{proof}
    Because $(A, B)$ is an attacking coalition on $G \searrow D$, we must have by \cref{eq:threat} that
    \begin{align}
        \sigma_{B \cap N_G\set{s}} \ge \alpha_s \cdot \sigma_{N_G \set{s} \setminus D} && \forall s \in A
    \end{align}
    It follows that for any $s \in A$,
    \begin{align}
    \sigma_{\pa{B \cup D} \cap N_G\set{s}} &= \sigma_{B \cap N_G\set{s}} + \sigma_{D \cap N_G\set{s}}\\
    &\ge \alpha_s \cdot \sigma_{ N_G\set{s} \setminus D}  + \sigma_{D \cap N_G\set{s}}\\
    &\ge \alpha_s \cdot \sigma_{N_G\set{s}}
\end{align}
and the desired result follows.
\end{proof}
\begin{corollary}\label{cor:combine}
Let $G = (S, V, E, \pi, \sigma, \alpha)$ be an arbitrary restaking graph, and further suppose that $(A_1, B_1), \dots, (A_T, B_T) \in \mc C(G)$ is a valid sequence of cascading attacks on $G$. Then, $\pa{\bigcup_{t = 1}^T A_t, \bigcup_{t = 1}^T B_t}$ is also a valid attack on $G$. 
\end{corollary}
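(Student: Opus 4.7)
The plan is to verify the two conditions defining a valid attack separately: (i) that $(\bigcup_t A_t, \bigcup_t B_t)$ is an attacking coalition on $G$, which follows from iterated application of \cref{lem:acombine}, and (ii) that the profit condition \cref{eq:balance} holds for the union, which follows by summing the per-attack profit inequalities and exploiting the disjointness of the $A_t$'s and $B_t$'s.

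For part (i), I would fix any $s \in \bigcup_{t=1}^T A_t$ and let $t^*$ be the (smallest) index with $s \in A_{t^*}$. By definition of a valid cascade, $(A_{t^*}, B_{t^*})$ is an attacking coalition on $G \searrow \bigcup_{i < t^*} B_i$. Applying \cref{lem:acombine} with $D = \bigcup_{i < t^*} B_i$, the pair $(A_{t^*}, B_{t^*} \cup \bigcup_{i < t^*} B_i) = (A_{t^*}, \bigcup_{i \le t^*} B_i)$ is an attacking coalition on $G$, so the threat inequality \cref{eq:threat} at $s$ holds for the validator set $\bigcup_{i \le t^*} B_i$. Since this set is contained in $\bigcup_{t=1}^T B_t$, monotonicity of stake sums gives \cref{eq:threat} at $s$ for $\bigcup_t B_t$ as well. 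As $s$ was arbitrary, $(\bigcup_t A_t, \bigcup_t B_t)$ is an attacking coalition on $G$.

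For part (ii), I would use that for each $t$, validity of $(A_t, B_t)$ on $G \searrow \bigcup_{i<t} B_i$ yields $\pi_{A_t} > \sigma_{B_t}$, as the profit condition \cref{eq:balance} depends only on stakes of the attacking validators and profits of the attacked services, not on the ambient graph. Summing over $t$ gives $\sum_{t=1}^T \pi_{A_t} > \sum_{t=1}^T \sigma_{B_t}$. Because the sequence is disjoint, both sides collapse to sums over the unions: $\pi_{\bigcup_t A_t} = \sum_t \pi_{A_t}$ and $\sigma_{\bigcup_t B_t} = \sum_t \sigma_{B_t}$, yielding \cref{eq:balance} for the union.

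There is no real obstacle here; the corollary is essentially a bookkeeping consequence of \cref{lem:acombine} combined with additivity of $\pi$ and $\sigma$ on disjoint unions. The only subtlety worth flagging in the write-up is that we invoke \cref{lem:acombine} for each $s$ with a possibly different index $t^*$, but since we only need \cref{eq:threat} to hold pointwise at each service, this is harmless; one could equivalently induct on $t$ to show that $(\bigcup_{i \le t} A_i, \bigcup_{i \le t} B_i)$ is an attacking coalition on $G$ for every $t$, but the direct argument above is shorter.
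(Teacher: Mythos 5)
Your proposal is correct and follows essentially the same route as the paper's proof: apply \cref{lem:acombine} with $D = \bigcup_{i<t} B_i$ to get that each $(A_t, \bigcup_{i \le t} B_i)$ is an attacking coalition on $G$, conclude the threat condition \cref{eq:threat} for $\bigcup_t B_t$ by monotonicity, and obtain \cref{eq:balance} by summing the per-attack inequalities $\pi_{A_t} > \sigma_{B_t}$ over the disjoint sets. The per-service bookkeeping with the index $t^*$ is just a slightly more explicit rendering of the paper's "by inspection of \cref{eq:threat}" step.
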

\begin{proof}
    By repeatedly applying Lemma \ref{lem:acombine}, we find that for each $t \in [T]$, $\pa{A_t, \bigcup_{i = 1}^t B_t}$ is an attacking coalition on $G$. It follows by inspection of \cref{eq:threat} that we must therefore have that $\pa{\bigcup_t A_t, \bigcup_t B_t}$ is an attacking coalition on $G$. To finish the result, it suffices to show that \cref{eq:balance} holds for this attacking coalition on the original graph $G$. This follows from the disjointness of the $A_t$'s and of the $B_t$'s:
    \begin{equation}
    \pi_{\bigcup_t A_t} = \sum_{t = 1} \pi_{A_t} > \sum_{t = 1}^T \sigma_{B_t}  =  \sigma_{\bigcup_t B_t}
    \end{equation}
    where in the inner inequality we use that for each $t \in [T]$, $\pi_{A_t} > \sigma_{B_t} $ by \cref{eq:balance}, as $(A_t, B_t)$ is a valid attack on $G \searrow \bigcup_{i=1}^{t-1} B_i$. It follows that $\pa{\bigcup_t A_t, \bigcup_t B_t}$ is a valid attack on $G$.
\end{proof}

\paragraph{Adding Multiplicative Slack.} Our condition is given by adding multiplicative slack to \cref{eq:balance}. Formally, we say that a restaking graph $G$ is {\em secure with $\gamma$-slack} if for all attacking coalitions $(A, B)$ on $G$,
\begin{equation}\label{eq:slack}
    (1 + \gamma) \underbrace{\pi_{A}}_{\text{Total profit from corrupting $A$}} \le \underbrace{\sigma_{B}}_{\text{Total stake owned by validators $B$}}
\end{equation}

\begin{theorem}\label{thm:slack}
    Suppose that a restaking graph $G = (S, V, E, \pi, \sigma, \alpha)$ is secure with $\gamma$-slack for some $\gamma > 0$. Then, for any $\psi > 0$, $R_\psi(G) < \pa{1 + \frac1\gamma} \psi$.
\end{theorem}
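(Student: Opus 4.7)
The plan is to reduce the worst-case cascading loss to a single attack and then exploit the $\gamma$-slack inequality. Fix any shock $D \in \mathbb{D}_\psi(G)$ and any valid cascade $(A_1, B_1), \dots, (A_T, B_T) \in \mc C(G \searrow D)$; it suffices to bound $\sigma_{\bigcup_t B_t}/\sigma_V$ by $\psi/\gamma$ and then add the initial $\psi$ term.

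First, I would apply \cref{cor:combine} to the cascade on the graph $G \searrow D$ to collapse it into a single valid attack $(A^\star, B^\star)$ on $G \searrow D$, where $A^\star = \bigcup_t A_t$ and $B^\star = \bigcup_t B_t$. Because this is a valid attack on $G \searrow D$, \cref{eq:balance} gives $\pi_{A^\star} > \sigma_{B^\star}$. Next I would lift this into $G$: by \cref{lem:acombine}, $(A^\star, B^\star \cup D)$ is an attacking coalition on the original graph $G$, so the $\gamma$-slack hypothesis \cref{eq:slack} applies to yield
\begin{equation*}
(1 + \gamma)\, \pi_{A^\star} \le \sigma_{B^\star \cup D}.
\end{equation*}

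Now I would exploit disjointness. Since $B^\star \subseteq V \setminus D$, we have $B^\star \cap D = \emptyset$, hence $\sigma_{B^\star \cup D} = \sigma_{B^\star} + \sigma_D$. Chaining this with the previous two displays,
\begin{equation*}
(1 + \gamma)\, \sigma_{B^\star} < (1 + \gamma)\, \pi_{A^\star} \le \sigma_{B^\star} + \sigma_D,
\end{equation*}
which rearranges to $\gamma\, \sigma_{B^\star} < \sigma_D$, i.e., $\sigma_{B^\star} < \sigma_D/\gamma$. Dividing by $\sigma_V$ and using $\sigma_D/\sigma_V \le \psi$ (since $D \in \mathbb{D}_\psi(G)$) gives $\sigma_{B^\star}/\sigma_V < \psi/\gamma$. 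Substituting into \cref{eq:multiplier} and taking the maximum over $D$ and over cascades produces the bound $R_\psi(G) < \psi + \psi/\gamma = (1 + 1/\gamma)\psi$.

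There is no real obstacle here, since all the combinatorial heavy lifting is already done by \cref{lem:acombine} and \cref{cor:combine}; the only subtlety is recognizing that the same combined attack must be analyzed from two different vantage points (as a valid attack on $G \searrow D$ to get the strict inequality $\pi_{A^\star} > \sigma_{B^\star}$, and as an attacking coalition on $G$ to invoke $\gamma$-slack), and that the strict inequality in \cref{eq:balance} is exactly what is needed to make the final bound strict.
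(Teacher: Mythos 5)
Your proof is correct and follows essentially the same route as the paper: collapse the cascade via \cref{cor:combine}, lift to an attacking coalition $(A^\star, B^\star\cup D)$ on $G$ via \cref{lem:acombine}, and chain the strict inequality from \cref{eq:balance} with the $\gamma$-slack bound to get $\gamma\,\sigma_{B^\star} < \sigma_D \le \psi\,\sigma_V$. The explicit remark that $B^\star \cap D = \emptyset$ justifies $\sigma_{B^\star\cup D} = \sigma_{B^\star} + \sigma_D$ is a small clarification the paper leaves implicit; otherwise the arguments coincide.
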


\begin{proof}
Take any $\psi > 0$ and any $D \in \mathbb D_\psi(G)$ for some restaking graph $G$ where (\ref{eq:slack}) holds. Let $(A_1, B_1) \dots, (A_T, B_T) \in \mc C(G\searrow D)$ be arbitrary. Applying \cref{cor:combine}, we must have that $(\bigcup_t A_t, \bigcup_t B_t) \in \mc C(G\searrow D)$ as well. Defining $A := \bigcup_t A_t$ and $B := \bigcup_t B_t$, we must therefore have that $(A, B)$ is an attacking coalition on $G \searrow D$, and furthermore that
\begin{align}
    \pi_A &> \sigma_B \label{eq:cascattack}
\end{align}
 By \cref{lem:acombine}, we must also have that $(A, B \cup D)$ is an attacking coalition on the original graph $G$. It then follows that as $G$ is secure with $\gamma$-slack, \cref{eq:slack} must hold on $(A, B \cup D)$, whence
\begin{equation}
    (1 + \gamma)  \pi_A \le  \sigma_{B \cup D} = \sigma_{B} + \sigma_{D}
\end{equation}
Putting this together with \cref{eq:cascattack}, we find that
\begin{equation}
    (1 + \gamma) \sigma_B < (1 + \gamma) \pi_A \le \sigma_B + \sigma_D  \le  \sigma_B + \psi \cdot \sigma_V
\end{equation}
It follows that 
\begin{align}
      \gamma \cdot \sigma_B < \psi \cdot \sigma_V &\implies \frac{\sigma_B }{\sigma_V} <  \frac{\psi}{\gamma}\\
      &\implies \psi +  \frac{\sigma_B }{\sigma_V} < \pa{1 + \frac1\gamma}\psi 
\end{align}
As we took $A_1, \dots, A_T$ to be arbitrary, we find that $R_\psi(G) < \pa{1 + \frac1\gamma}\psi$, as desired.
\end{proof}

The EigenLayer sufficient conditions (\ref{eq:EL}) can be similarly ``scaled up'' to yield efficiently checkable sufficient conditions for security with $\gamma$-slack (and hence, by \cref{thm:slack}, robustness to cascading attacks).
\begin{corollary}\label{cor:ELslack}
Let $G$ be a restaking graph such that, for all validators $v \in V$,
    \begin{equation}\label{eq:ELslack}
    \sum_{s \in N_G(\set{v})} \frac{\sigma_v }{\sigma_{N_G\pa{\set{s}}} } \cdot \frac{(1 + \gamma) \pi_s}{\alpha_s} \le \sigma_v 
\end{equation}
Then, $R_\psi(G) < \pa{1 + \frac1\gamma} \psi$.
\end{corollary}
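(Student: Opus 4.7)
The plan is to reduce \cref{cor:ELslack} to \cref{thm:slack} by showing that condition (\ref{eq:ELslack}) is exactly a sufficient condition for $G$ to be secure with $\gamma$-slack. The key observation is that (\ref{eq:ELslack}) can be viewed as the EigenLayer sufficient condition (\ref{eq:EL}) applied to an auxiliary restaking graph in which profits from corruption are inflated by a factor of $(1+\gamma)$.

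Concretely, I would define the auxiliary graph $G' = (S, V, E, \pi', \sigma, \alpha)$ with the same validators, services, edges, stakes, and thresholds as $G$, but with $\pi'_s := (1+\gamma)\pi_s$ for every $s \in S$. Plugging $\pi'_s$ into the left-hand side of (\ref{eq:EL}) recovers exactly the left-hand side of (\ref{eq:ELslack}), so (\ref{eq:ELslack}) for $G$ is equivalent to (\ref{eq:EL}) for $G'$. By \cref{claim:EL}, this implies $G'$ is secure, i.e., no valid attack exists on $G'$.

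Next I would translate security of $G'$ back into $\gamma$-slack security of $G$. Since the stakes and thresholds are identical in $G$ and $G'$, the set of attacking coalitions (determined by (\ref{eq:threat})) is identical in the two graphs. For any attacking coalition $(A,B)$ on $G$, security of $G'$ means that (\ref{eq:balance}) fails for $(A,B)$ in $G'$, i.e., $\pi'_A \le \sigma_B$. Rewriting, $(1+\gamma)\pi_A \le \sigma_B$, which is exactly (\ref{eq:slack}). Hence $G$ is secure with $\gamma$-slack. Applying \cref{thm:slack} then yields $R_\psi(G) < (1+\tfrac{1}{\gamma})\psi$.

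No step here is a real obstacle; the only subtlety worth stating carefully is the passage from ``$G'$ is secure'' (negation of the strict inequality in (\ref{eq:balance})) to the non-strict inequality in (\ref{eq:slack}), which matches the non-strict form of the $\gamma$-slack condition exactly. The whole proof is essentially a change of variables $\pi_s \mapsto (1+\gamma)\pi_s$ combined with \cref{claim:EL} and \cref{thm:slack}.
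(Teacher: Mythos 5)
Your proposal is correct and matches the paper's own argument: the paper likewise observes that security with $\gamma$-slack is equivalent to the absence of valid attacks when each $\pi_s$ is replaced by $(1+\gamma)\pi_s$, and then invokes \cref{claim:EL} with these inflated profits followed by \cref{thm:slack}. Your explicit handling of the strict-versus-non-strict inequality is a nice touch but the route is the same.
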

\begin{proof}
    This follows from \cref{thm:slack} and \cref{claim:EL}. Noting that the $\gamma$-slack condition holds precisely iff no valid attacks exist when profits from corruption $\pi_s$ are inflated by a multiplicative factor of $(1+\gamma)$, it suffices to apply EigenLayer's sufficient conditions from \cref{claim:EL} with modified profits from corruption $(1+\gamma)\pi_s$.
\end{proof}
Note that, given a restaking network, it is straightforward to compute the minimum value of~$\gamma$ such that the condition in~\eqref{eq:ELslack} holds. 
This value can then be interpreted as an easily computed ``risk measure'' of such a network.

\section{Lower Bounds for Global Security}\label{sec:global_lb}

In this section, we show that the upper bounds from the previous section are tight. We first show that if there is no multiplicative slack (i.e. $\gamma = 0$), then very small shocks can cause all stake to be lost in the worst case. This holds even under the EigenLayer conditions (\ref{eq:EL})\footnote{The graph exhibited in the proof of \cref{thm:gn} has $\pi_x/\sigma_a \to \infty$ as $\epsilon \to 0$. It is possible to construct a counterexample with similar properties while maintaining that $\pi_s/\sigma_v$ is greater than some universal constant for any $s \in S$ and $v \in V$. This is done in \cref{thm:localtight2}.}.
\begin{theorem}\label{thm:gn}
    For any $0 < \epsilon < 1$, there exists a restaking graph $G$ that is secure and meets the EigenLayer condition (\ref{eq:EL}), but has $R_{\psi}(G) = 1$ for all $\psi \ge \epsilon$. 
\end{theorem}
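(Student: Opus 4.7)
The plan is to construct the simplest possible restaking graph that is secure and satisfies \eqref{eq:EL}, yet collapses entirely after a shock of size at most $\epsilon$. The key idea is to make the security inequality \eqref{eq:balance} tight (holding with equality) before any shock, so that removing any positive mass of stake flips it into strict violation and enables an attack that slashes every remaining validator.

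Concretely, I would take $G$ to consist of a single service $x$ and two validators $a, v$, both restaking for $x$, with stakes $\sigma_a = \epsilon$, $\sigma_v = 1 - \epsilon$, profit $\pi_x = 1$, and $\alpha_x = 1$; the total stake is $\sigma_V = 1$. After setting this up, three short verifications finish the job. \textbf{Initial security:} because $\alpha_x = 1$, any attacking coalition on $\{x\}$ must include both $a$ and $v$, so the cost is $\sigma_a + \sigma_v = 1 = \pi_x$, which just fails the strict inequality in \eqref{eq:balance}. \textbf{EigenLayer condition:} at either validator $u$, the left-hand side of \eqref{eq:EL} collapses to $\sigma_u \cdot (\pi_x/\alpha_x)/\sigma_{N_G(\{x\})} = \sigma_u$, holding with equality. \textbf{Cascade:} for any $\psi \ge \epsilon$, the shock $D = \{a\}$ lies in $\mathbb{D}_\psi(G)$ since $\sigma_D/\sigma_V = \epsilon \le \psi$; in $G \searrow \{a\}$ the only remaining supporter of $x$ is $v$, so the new attack threshold is $\alpha_x \cdot \sigma_v = 1-\epsilon$, which $v$ meets exactly, and $\pi_x = 1 > 1-\epsilon$ makes the attack $(\{x\},\{v\})$ strictly profitable.

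Together, the shock and the cascading attack slash every validator in $V$, so the total fraction of stake lost equals $1$; this gives $R_\psi(G) = 1$ for every $\psi \ge \epsilon$, as required.

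The only delicate ingredient is arranging for \eqref{eq:balance} to hold with equality before the shock (to preserve security) while flipping to strict violation after removing any positive mass of stake (to enable the cascade). Choosing $\pi_x$ equal to the total initial stake $\sigma_V$ achieves both at once, and this is exactly what forces the ratio $\pi_x/\sigma_a = 1/\epsilon$ to diverge as $\epsilon \to 0$, consistent with the phenomenon flagged in the footnote.
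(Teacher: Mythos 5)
Your construction is exactly the paper's: one service with $\pi_x=1$, $\alpha_x=1$, and two validators with stakes $\epsilon$ and $1-\epsilon$, verified via the tight EigenLayer inequality, the shock $D=\{a\}$, and the post-shock valid attack on the remaining validator. The argument is correct and takes essentially the same approach as the paper (which deduces initial security from \cref{claim:EL} rather than your direct check, an immaterial difference).
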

\begin{proof}
    We construct a restaking graph $G = (S, V, E, \pi, \sigma, \alpha)$ with one service $S = \set{x}$ and two validators $V = \set{a,b}$, where an edge exists between each validator and the service (i.e. $E := (\set{x}, \set{a}), (\set{x}, \set{b})$). We then let $\sigma_a := \epsilon$, $\sigma_b := 1 - \epsilon$, $\pi_x := 1$, and $\alpha_x := 1$. Without loss of generality, we may assume $\psi < 1$ as $R_1(G) = 1$ for any restaking graph $G$. This graph satisfies (\ref{eq:EL}) as 
    \begin{align}
        \frac{\sigma_a}{\sigma_a + \sigma_b} \cdot \frac{\pi_x}{\alpha_x} = \sigma_a\\
        \frac{\sigma_b}{\sigma_a + \sigma_b} \cdot \frac{\pi_x}{\alpha_x} = \sigma_b
    \end{align}
    whence the graph is also secure by \cref{claim:EL}. We now consider an initial shock $D = \set{a}$. As $\sigma_a/\sigma_V = \epsilon \le \psi$, it follows that $D \in D_\psi(G)$. The pair $(\set{x}, \set{b})$ is a valid attack on $G \searrow D$, since $\set{b} = N_{G \searrow D} \set{x}$ whence it is an attacking coalition, and $\sigma_b < \pi_x$. It follows that $R_\psi(G) \ge \frac{\sigma_a + \sigma_b}{\sigma_V} = 1$ as desired.
\end{proof}

Next, we show that the bound we give in \cref{thm:slack} (indeed, more strongly, the condition given in \cref{cor:ELslack}) is tight for all $\psi, \gamma > 0$. 

\begin{theorem}\label{thm:noslack}
    For any $\psi, \gamma, \epsilon > 0$ such that
    \begin{equation}\label{eq:nontrivial}
        0 \le \pa{1 + \frac1\gamma}\psi - \epsilon \le 1,
    \end{equation}
    there exists a restaking graph $G$ that satisfies the condition (\ref{eq:ELslack}) from \cref{cor:ELslack} but has $R_\psi(G) \ge \pa{1 + \frac1\gamma}\psi - \epsilon$.
\end{theorem}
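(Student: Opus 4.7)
The plan is to construct a small restaking graph that simultaneously saturates the EigenLayer slack condition \cref{eq:ELslack} and the profitability inequality \cref{eq:balance}, making the upper bound of \cref{thm:slack} tight up to $\epsilon$. First dispose of the trivial case: if $(1+\tfrac{1}{\gamma})\psi - \epsilon \le \psi$, then any restaking graph $G$ satisfying \cref{eq:ELslack} works, because taking $D = \emptyset$ and no cascade gives $R_\psi(G) \ge \psi \ge (1+\tfrac{1}{\gamma})\psi - \epsilon$ directly from the definition. So assume $\epsilon < \psi/\gamma$, in which case $T := (1 + \tfrac{1}{\gamma})\psi - \epsilon$ lies in $(\psi, 1]$, with the upper bound supplied by hypothesis \cref{eq:nontrivial}.

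For this case I would take $G$ with a single service $S = \set{x}$, setting $\alpha_x = 1$ and $\pi_x = T/(1 + \gamma)$, together with validators $V = \set{v_D, v_B, v_0}$ of stakes $\sigma_{v_D} = \psi$, $\sigma_{v_B} = T - \psi$, and $\sigma_{v_0} = 1 - T$ (omitting $v_0$ if $T = 1$), so that $\sigma_V = 1$. I would connect both $v_D$ and $v_B$ to $x$ and leave $v_0$ isolated. The nontrivial checks of \cref{eq:ELslack} occur at $v \in \set{v_D, v_B}$, where the single term on the left evaluates to $\tfrac{\sigma_v}{T} \cdot \tfrac{(1+\gamma)\cdot T/(1+\gamma)}{1} = \sigma_v$, meeting the bound with equality; the isolated validator $v_0$ contributes an empty sum and trivially satisfies the bound.

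Next I would exhibit the shock-and-attack witnessing the lower bound. Take $D = \set{v_D}$; since $\sigma_D/\sigma_V = \psi$, we have $D \in \mathbb D_\psi(G)$. In $G \searrow D$, the service $x$ is supported only by $v_B$, so $(\set{x}, \set{v_B})$ is an attacking coalition because $\sigma_{v_B} = T - \psi = \alpha_x \cdot \sigma_{N_{G\searrow D}(\set{x})}$, and it is a valid attack because $\pi_x = T/(1+\gamma) > T - \psi = \sigma_{v_B}$; the strict inequality rearranges to $T < (1 + \tfrac{1}{\gamma})\psi$, which holds since $\epsilon > 0$. Consequently $R_\psi(G) \ge \psi + \sigma_{v_B}/\sigma_V = \psi + (T - \psi) = T = (1 + \tfrac{1}{\gamma})\psi - \epsilon$, as claimed.

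I do not anticipate a serious obstacle: the construction essentially reads off the two inequalities that are chained together in the proof of \cref{thm:slack}, setting $\pi_x$ exactly at the ceiling imposed by \cref{eq:ELslack} and placing the attacker's stake just below that ceiling. The sharper counterexample with bounded ratios $\pi_s/\sigma_v$ alluded to in the footnote of the theorem statement is deferred to \cref{thm:localtight2}.
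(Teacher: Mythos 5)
Your proposal is correct and is essentially the paper's own construction: after the same reduction to the case $\epsilon < \psi/\gamma$, your graph (two validators on one service with $\alpha_x=1$, plus an isolated padding validator, with $\pi_x$ chosen to saturate \cref{eq:ELslack}) is exactly the paper's three-validator example normalized to $\sigma_V = 1$, with the same shock and the same valid attack $(\set{x},\set{v_B})$. The verification steps (equality in \cref{eq:ELslack}, strict profitability from $\epsilon>0$, and the resulting bound $R_\psi(G)\ge (1+\tfrac1\gamma)\psi-\epsilon$) match the paper's argument.
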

\begin{proof}
    We construct a restaking graph $G = (S, V, E, \pi, \sigma, \alpha)$ with three validators $V = \set{a, b, c}$ and one service $S = \set{x}$, where an edge exists between each of the validators $a$ and $b$ and the service $x$ (i.e. the edge set $E := \set{(x,a), (x,b)}$). Without loss of generality, suppose that $\epsilon \le \psi/\gamma$. Let $\sigma_a > 0$ be any positive constant. We define
    \begin{align}
        \sigma_b &:= \sigma_a \pa{\frac1\gamma - \frac\epsilon\psi}\\
        \sigma_c &:= \sigma_a\pa{\frac{1 - \psi + \epsilon}{\psi} - \frac1\gamma}\\
        \pi_x &:= \frac{\pa{1 + \frac1\gamma} \psi - \epsilon}{1 + \gamma} \cdot \sigma_V\\
        \alpha_s &:= 1
    \end{align}
    Notice first that $\sigma_b \ge 0$ as we have taken $\epsilon \le \psi/\gamma$. Next, observe that
    \begin{align}
        \sigma_c \ge 0 &\iff \frac{1 - \psi + \epsilon}{\psi} \ge \frac1\gamma\\
        &\iff 1 \ge \pa{1 + \frac1\gamma}\psi - \epsilon
    \end{align}
    whence $\sigma_c \ge 0$ by \cref{eq:nontrivial}. Finally, we must also have that $\pi_x \ge 0$ by \cref{eq:nontrivial} as well. Next, notice by construction that
    \begin{align}
        \sigma_V &= \sigma_a + \sigma_b + \sigma_c = \sigma_a\pa{1 + \frac1\gamma - \frac\epsilon\psi + \frac{1 - \psi + \epsilon}{\psi} - \frac1\gamma} = \frac{\sigma_a}{\psi}
    \end{align}
    This graph meets condition (\ref{eq:ELslack}) from \cref{cor:ELslack}, as 
    \begin{align}
        \sum_{s \in N_G\set{a}} \frac{\sigma_a}{\sigma_{N_G\set{s}}} (1 + \gamma)\pi_s &= \frac{\sigma_a}{\sigma_a + \sigma_b} \bra{\pa{1 + \frac1\gamma}\psi - \epsilon}\sigma_V\\
        &= \frac{\sigma_a}{\sigma_a\pa{1 + \frac1\gamma - \frac\epsilon\psi}}\bra{\pa{1 + \frac1\gamma}\psi - \epsilon}\frac{\sigma_a}{\psi}\\
        &= \sigma_a
    \end{align}
    A similar argument shows that 
    \begin{equation}
         \sum_{s \in N_G\set{b}} \frac{\sigma_b}{\sigma_{N_G\set{s}}} (1 + \gamma)\pi_s = \frac{\sigma_b}{\sigma_a\bra{\pa{1 + \frac1\gamma}\psi - \epsilon}}\bra{\pa{1 + \frac1\gamma}\psi - \epsilon}\sigma_a = \sigma_b
    \end{equation}
    Finally, as the validator $c$ has no neighbors, it also satisfies the condition, whence the graph indeed satisfies (\ref{eq:ELslack}). We now consider an initial shock $D := \set{a}$ to the graph. Because
    \begin{equation}
        \frac{\sigma_a}{\sigma_V} = \frac{\sigma_a}{\sigma_a/\psi} = \psi,
    \end{equation}
    the shock $D \in \mathbb D_\psi$ constitutes a $\psi$ fraction of the total stake as desired. The attack $(\set{x}, \set{b})$ is a valid attack on $G \searrow D$. To see this, notice first that $(\set{x}, \set{b})$ is an attacking coalition on $G \searrow D$ as $\set{b} = N_{G \searrow D} \set{x}$. Furthermore, $\set{b}$ is incentivized to attack since
    \begin{align}
        \pi_x - \sigma_b &= \frac{\pa{1 + \frac1\gamma} \psi - \epsilon}{1 + \gamma} \cdot \sigma_V - \sigma_b\\
        &= \bra{\frac{1 + \frac1\gamma - \frac{\epsilon}{\psi}}{1 + \gamma} - \pa{\frac1\gamma - \frac\epsilon\psi}}\sigma_a\\
        &= \bra{\frac{\pa{1 + \frac1\gamma}\psi + \gamma \epsilon}{(1 + \gamma) \psi} - \frac1\gamma}\sigma_a\\
        &= \frac{\gamma \epsilon \sigma_a}{(1 + \gamma) \psi} > 0
    \end{align}
    whence \cref{eq:balance} is satisfied for the pair $(\set{x}, \set{b})$. It follows that
    \begin{equation}
        R_\psi(G) \ge \frac{\sigma_a + \sigma_b}{\sigma_V} = \frac{\pa{1 + \frac1\gamma - \frac\epsilon\psi}\sigma_a}{\sigma_a/\psi} = \pa{1 + \frac1\gamma}\psi - \epsilon
    \end{equation}
    as desired.
\end{proof}

\section{Local Security}\label{sec:local}

The bound in \cref{thm:slack} on the worst-possible stake loss from cascading attacks is reassuring from a global perspective, but less so from the perspective of one or a small number of services who would like an assurance that they will not be among those affected by such attacks. 
Beginning with this section, 
we focus on a specific coalition of services $C \subseteq S$ that seeks to insulate their shared security $\Gamma(C)$ against shocks and resulting cascading attacks that may come about due to the decisions of other services and validators. Formally, we denote by 
\begin{equation}
    \Gamma(C) := \set{v \in V \mid N_G\set{v} \subseteq C}
\end{equation}
the set of validators that exclusively provide security for services in $C$. 

\paragraph{Worst-case stake loss (local version).}
As before, we first suppose that an initial shock affects the restaking graph. Whereas previously, we considered shocks for which the total stake in the shock was bounded, we now consider shocks for which the total stake that impacts the exclusive security of some coalition of services $C$ (i.e., stake that secures services from $C$ and only services from $C$) is bounded. Formally, for any coalition of services $C \subseteq S$ within some restaking graph $G$, we let

\begin{equation}
    \mathbb D_\psi(C, G) := \set{D \subseteq V \mid \frac{\sigma_{D \cap \Gamma(C)}}{\sigma_{\Gamma(C)}} \le \psi}
\end{equation}
denote the set of all validator coalitions that provide at most $\psi$ stake to the aggregate security of the coalition of services $C$. Notice that shocks $D \in \mathbb D_\psi(C, G)$ may have much more total stake $\sigma_D$ than a $\psi$ fraction of the graph. We are instead only guaranteed that the impact of the shock on stake that is being used exclusively for members in $C$ is small. We are now interested in the potential cascading losses that can affect the stake that is exclusively utilized by the coalition $C$ after a shock occurs that destroys at most $\psi$ stake from the aggregate security of $C$. Formally, we study the quantity

\begin{equation}
    R_\psi(C, G) := \underbrace{\psi}_{\text{Initial Shock}} + \max_{D \in \mathbb D_\psi(C, G)} \max_{(A_1, B_1), \dots, (A_T, B_T) \in \mc C(G \searrow D)} \underbrace{\frac{\sigma_{\bigcup_{t = 1}^T B_t \cap \Gamma(C)}}{\sigma_{\Gamma(C)}}}_{\text{Stake lost from cascading attacks}}
\end{equation}

\paragraph{Local security conditions.} We seek sufficient conditions on a restaking graph $G$ that guarantee a nontrivial upper bound on $R_\psi(C, G)$. 
Ideally, the sufficient condition
would depend only the neighborhood/choices of the coalition $C$
to provide a guarantee that holds regardless of the choices made by other validators and services (i.e., the services of $C$ can attempt to ``control their own destiny'' by ensuring that the locally defined sufficient condition holds). Formally, given a restaking graph $G = (S, V, E, \pi, \sigma, \alpha)$ and a coalition of services $C \subseteq S$, we call a restaking graph $G' = (S', V', E', \pi', \sigma', \alpha')$ a \textit{$C$-local variant} of $G$ if $C$ cannot distinguish $G'$ from $G$ on the basis of local information: $C \subseteq S'$, $N_GC = N_{G'}C$, and
\begin{align}
    (\pi_s, \alpha_s) &= (\pi'_s, \alpha'_s) && \forall s \in C\\
    \pa{\sigma_v, N_G\set{v}} &= \pa{\sigma'_v, N_{G'}\set{v}} && \forall v \in N_GC
\end{align}
We then define a \textit{local security condition} $f: (C, G) \mapsto \set{0,1}$ to be a Boolean function that takes as input a restaking graph $G$ and a coalition of services $C \subseteq S$ such that $f(C, G)$ must be equal to $f(C, G')$ for all $C$-local variants $G'$. The intuition behind this definition is that the condition should only depend on service-level information (e.g. profits from corruption, security thresholds) for services in the coalition, and validator-level information for validators in $N_GC$. 

\paragraph{Local security impossibility.} Unfortunately, without further restrictions on the attacks under consideration (like those defined later in this section), it is impossible to construct any nontrivial local security condition that yields any nontrivial upper bound on $R_\psi(C, G)$.

\begin{theorem}\label{thm:localimpossible}
    For any local security condition $f$, any secure restaking graph $G$ and coalition of services $C \subseteq S$ such that $f(C, G) = 1$, there exists a secure $C$-local variant $G'$ of $G$ such that $R_0(C, G') = 1$. 
\end{theorem}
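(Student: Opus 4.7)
Plan: The plan is to construct an explicit ``worst-case'' $C$-local variant $G'$ of $G$ by attaching a single expensive service $s^*$ outside $C$ together with two new validators, calibrated so that $G'$ itself remains secure but the removal of a single one of the new validators enables a valid joint attack that includes every service in $C$ and therefore slashes all of $\Gamma(C)$ as collateral damage. Because nothing inside $C$ or incident to $N_G C$ is touched, $G'$ is automatically a $C$-local variant, so by the invariance property of local security conditions $f(C, G') = f(C, G) = 1$ (this is a free byproduct rather than something we need to arrange).

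Concretely, I would take $s^*$ with threshold $\alpha^* := 1$ and profit $\pi^* := \sigma_{\Gamma(C)} + 2$, and introduce two new validators $u_1, u_2$ whose only neighbor is $s^*$, with $\sigma_{u_1} := 1$ and $\sigma_{u_2} := \pi^*$. To verify that $G'$ is secure I would case-split on whether a putative attack $(A, B)$ contains $s^*$. If $s^* \notin A$, then because $u_1, u_2$ touch nothing else, the attack restricts to a valid attack on $G$, contradicting $G$'s security. If $s^* \in A$, then $\alpha^* = 1$ forces $\{u_1, u_2\} \subseteq B$, and the decomposition $\pi_A - \sigma_B = (\pi_{A \setminus \{s^*\}} - \sigma_{B \setminus \{u_1,u_2\}}) + (\pi^* - \sigma_{u_1} - \sigma_{u_2})$ has first parenthetical $\le 0$ by $G$'s security and second parenthetical equal to $-1$, so the attack is not valid.

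For the devastating shock, take $D := (N_G C \setminus \Gamma(C)) \cup \{u_2\}$, which is disjoint from $\Gamma(C)$ and therefore lies in $\mathbb D_0(C, G')$. After deleting $D$, each $s \in C$ retains only $\Gamma(C) \cap N_G\{s\}$ as neighbors, so its threshold is trivially met by any $B \supseteq \Gamma(C)$; simultaneously $s^*$ retains only $u_1$, so its $\alpha^* = 1$ threshold is met by $u_1$ alone. I would then check that the joint attack $(C \cup \{s^*\}, \Gamma(C) \cup \{u_1\})$ has validity $\pi_C + \pi^* > \sigma_{\Gamma(C)} + 1$, which holds by the choice of $\pi^*$ regardless of whether $\pi_C \ge 0$. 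This attack slashes all of $\Gamma(C)$, yielding $R_0(C, G') = 1$.

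The main obstacle is calibrating $\pi^*$ against the combined stake $\sigma_{u_1} + \sigma_{u_2}$ so the joint ``$C$ plus $s^*$'' attack is strictly unprofitable in $G'$ but strictly profitable once $u_2$ is removed: removal of a single auxiliary validator must tip the validity ledger in exactly the right direction. The small additive slack ($+1$ and $+2$) is chosen to produce precisely this one-unit gap and also to handle degenerate regimes such as $\pi_C = 0$ or a very small $\Gamma(C)$; the delicate point is that no single fixed value of $\pi^*$ works, so the calibration is expressed in terms of $\sigma_{\Gamma(C)}$, which is data local to $C$.
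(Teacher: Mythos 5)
Your proposal is correct and is essentially the paper's own construction: both augment $G$ with a single sacrificial service $s^*$ outside $C$ plus two fresh validators dedicated to it (one heavy, one light), calibrated so that $G'$ remains secure and is trivially a $C$-local variant, and then delete the heavy validator so that $s^*$'s large profit subsidizes a joint valid attack on $C \cup \{s^*\}$ that slashes all of $\Gamma(C)$, giving $R_0(C,G')=1$. The only difference is bookkeeping: the paper sets $\pi_{s^*} = \sigma_{N_GC} - \pi_C + 2\epsilon$ and attacks with $N_GC \cup \{b\}$ after removing only the heavy validator, whereas you set $\pi^* = \sigma_{\Gamma(C)} + 2$ and additionally shock away $N_GC \setminus \Gamma(C)$ (permitted since it is disjoint from $\Gamma(C)$) so the attack needs only $\Gamma(C) \cup \{u_1\}$ — both calibrations work, and your security case-split for $G'$ is in fact spelled out more carefully than in the paper.
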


\begin{proof}
    Take any $f$, $C$, and secure $G$ such that $f(C, G) = 1$. Define 
    \begin{equation}
        \Delta := \sigma_{N_GC} - \pi_C
    \end{equation}
    to be the total overcollateralization of $C$ in aggregate. Next, we define $G'$ to be an augmented version of $G$, where we add a new service $s^*$ that has a profit from corruption $\pi_{s^*} = \Delta + 2\epsilon$ where $\epsilon > 0$. We further add $2$ validators $a$ and $b$ to the graph who are adjacent only to $s^*$. We let $\sigma_a := \Delta + \epsilon$ and $\sigma_b := \epsilon$. As $\sigma_a + \sigma_b \ge \pi_{s^*}$, the graph $G'$ must be secure as $G$ was secure. Next, notice that as the validator $a$ is not path-connected to $C$, $G'$ must be a $C$-local variant of $G$. However, by construction, the attack $\pa{C \cup \set{s^*}, N_GC \cup \set{b}}$ is valid on the graph $G' \searrow \set{a}$. It follows that $R_0(C, G') = 1$.
\end{proof}

\paragraph{Stable attacks.} While the above impossibility appears to be quite strong, it is somewhat contrived. At the heart of the impossibility is that under the definition of a valid attack (i.e. \cref{eq:balance,eq:threat}), not every validator must be productive in carrying out the attack. There may be a subset of validators in the attack that can yield more net profit than the full coalition. In what follows, we show that if we assume that malicious validator coalitions will choose to add others to their ranks only if it is profitable for them in net to do so, then a local security condition with guarantees similar to those in \cref{thm:slack} does indeed exist. Formally, for $A \subseteq S$ and $B \subseteq V$, we say that an attack $(A, B)$ is \textit{stable} if it is valid (i.e. \cref{eq:balance,eq:threat} hold), and for all $A' \subseteq A$ and $B' \subseteq B$ such that $(A', B')$ is valid, 
\begin{equation}\label{eq:stable}
    \sigma_{B \setminus B'} < \pi_{A \setminus A'} 
\end{equation}
Intuitively, if~\eqref{eq:stable} did not hold, then the validators of $B'$ would be better off ditching those in $B \setminus B'$ and attacking only the services in~$A'$.

We say that a disjoint sequence $(A_1, B_1), \dots, (A_T, B_T) \in 2^S \times 2^V$ is a cascade of stable attacks on a restaking graph $G = (S, V, E, \pi, \sigma, \alpha)$  if for each $t \in [T]$, $(A_t, B_t)$ is a stable attack on $G \searrow \bigcup_{i = 1}^{t-1} B_i$. We denote by $\mc S(G)$ the set of all such sequences of cascading stable attacks. In light of Theorem \ref{thm:localimpossible}, we redefine our notion of worst-case stake loss using stable attacks:
\begin{equation}
    R_\psi(C, G) := \psi + \max_{D \in \mathbb D_\psi(C, G)} \max_{(A_1, B_1), \dots, (A_T, B_T) \in \mc S(G \searrow D)} \frac{\sigma_{\bigcup_t B_t \cap \Gamma(C)}}{\sigma_{\Gamma(C)}}
\end{equation}
Unlike valid attacks, unions of sequences of stable cascading attacks need not be stable (which will complicate the proof of \cref{thm:local} in the next section).

\begin{claim}
    There exists a restaking graph $G$ and a sequence $(A_1, B_1), (A_2, B_2) \in \mc S(G)$ such that $\pa{A_1 \cup A_2, B_1 \cup B_2} \not \in \mc S(G)$.
\end{claim}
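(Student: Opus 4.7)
The plan is to exhibit an explicit small restaking graph on which two consecutive stable attacks combine into a valid but non-stable attack. The conceptual picture is this: if the first attack removes a validator that was simultaneously contributing to the threshold of a service targeted by the second attack, then on the post-shock graph the second attack must recruit extra validators to meet the (now higher-percentage) threshold. When the two attacks are recombined on the original graph, those extras become redundant because the previously removed validator is reinstated for accounting purposes.

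Concretely, I would take $S = \set{s_1, s_2}$ and $V = \set{v_1, v_2, v_3}$ with unit stakes, edges making $\set{v_1, v_2}$ the neighbors of $s_1$ and $\set{v_1, v_2, v_3}$ the neighbors of $s_2$, thresholds $\alpha_{s_1} = 1/2$ and $\alpha_{s_2} = 2/3$, and profits $\pi_{s_1} = 6/5$ and $\pi_{s_2} = 11/5$. The candidate sequence is $(A_1, B_1) = (\set{s_1}, \set{v_1})$ followed by $(A_2, B_2) = (\set{s_2}, \set{v_2, v_3})$. On $G$, the first attack is valid (threshold $1 \le \sigma_{v_1}$, profit $6/5 > 1$) and vacuously stable, since no proper subset of $(\set{s_1}, \set{v_1})$ is itself a valid attack. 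On $G \searrow B_1$, the service $s_2$ has neighborhood $\set{v_2, v_3}$ with threshold $4/3$, so both $v_2$ and $v_3$ are needed to meet it; the second attack is then valid ($11/5 > 2$) and stable because neither singleton meets $4/3$.

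To witness non-stability of the union $(A_1 \cup A_2, B_1 \cup B_2) = (\set{s_1, s_2}, \set{v_1, v_2, v_3})$ on $G$, I would exhibit the sub-attack $(A', B') = (\set{s_1, s_2}, \set{v_1, v_2})$. On $G$ this is an attacking coalition (for $s_2$, $\sigma_{\set{v_1, v_2}} = 2 = \alpha_{s_2}\sigma_{N_G\set{s_2}}$; the condition for $s_1$ is immediate) and it is profitable ($\pi_{s_1} + \pi_{s_2} = 17/5 > 2$), hence valid. But \cref{eq:stable} would require $\sigma_{B \setminus B'} = \sigma_{v_3} = 1 < \pi_{A \setminus A'} = \pi_\emptyset = 0$, which fails. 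The delicate step is calibrating $\alpha_{s_2}$ so that the threshold is met \emph{exactly} by $\set{v_1, v_2}$ on $G$ while forcing $\set{v_2, v_3}$ as the minimal attacking coalition on $G \searrow B_1$; once this tightness is arranged, the redundancy of $v_3$ in the union on $G$ is automatic and the claim follows.
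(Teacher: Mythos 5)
Your construction is correct: the cascade $(\set{s_1},\set{v_1})$, $(\set{s_2},\set{v_2,v_3})$ is indeed a stable cascade, and the valid sub-attack $(\set{s_1,s_2},\set{v_1,v_2})$ witnesses that the union violates \cref{eq:stable} since $\sigma_{v_3}=1 \not< \pi_\emptyset = 0$. This is essentially the same approach as the paper, which gives an even smaller explicit counterexample (two services, two fully-shared validators with $\alpha=\tfrac12$) and likewise witnesses non-stability of the union by dropping a redundant validator while dropping no services.
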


\begin{proof}
        Consider restaking graph where there are two services $S = \set{x, y}$ and two validators $V = \set{a,b}$ where both validators are restaking in both services. Furthermore, $\pi_x = \pi_y = 2$, $\alpha_x = \alpha_y = \frac12$, and $\sigma_a = \sigma_b = 1$. In this case, the sequence $\pa{\set{x}, \set{a}}, \pa{\set{y}, \set{b}} \in \mc S(G)$. However, $\pa{\set{x, y}, \set{a, b}} \not \in \mc S(G)$ because the attack $\pa{\set{x, y}, \set{a}}$ is a valid attack.
\end{proof}

\section{A Local Security Condition for Stable Attacks}\label{sec:local_ub}

In this section, we give a family of local security conditions that yield guarantees on the local worst-case stake loss $R_\psi(C, G)$ that resemble our result from Theorem \ref{thm:slack} for global security. In Theorems \ref{thm:slack} and \ref{thm:noslack}, we showed that security with $\gamma$-slack was necessary and sufficient in order to obtain a $\pa{1 + \frac1\gamma}\psi$ upper-bound on $R_\psi(G)$. In other words, it was both necessary and sufficient to ensure that all attacking coalitions $(A, B)$ were overcollateralized multiplicatively by a factor of $(1 + \gamma)$. Our condition for local security is similar: we must ensure that certain \textit{attack headers} $(X, Y)$ are overcollateralized multiplicatively by a factor of $(1 + \gamma)$.

\paragraph{Attack header.} Formally, we say that $(X, Y)$ is an {\em attack header}, for $X \subseteq S$ and $Y \subseteq \Gamma(X)$, if there exists a set of validators $B \subseteq V$ satisfying
\begin{equation}
    B \cap \Gamma(X) = \emptyset
\end{equation}
such that $(X, B \cup Y)$ is an attacking coalition. In other words, $(X, Y)$ is an attack header if $Y$ can be appended to a collection of validators $B$ that may be slashed without attacking services in $X$ to form an attacking coalition that attacks the services $X$. An attacking coalition is a special case of an attack header (in which~$B$ can be taken as $\emptyset$).

\begin{theorem}\label{thm:local}
    Let $G = (S, V, E, \pi, \sigma, \alpha)$ be a restaking graph and $C \subseteq S$ be a coalition of services. If, for all attack headers $(X, Y)$ where $X \subseteq C$, 
    \begin{equation}\label{eq:stubslack}
        (1 + \gamma) \pi_X \le \sigma_Y
    \end{equation}
    then $R_\psi(C, G) < (1+\frac{1}{\gamma})\psi$. Furthermore, the Boolean function that checks whether \cref{eq:stubslack} holds for all attack headers is a local security condition.
\end{theorem}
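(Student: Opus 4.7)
The plan is to combine two ingredients: a reduction of the whole cascade to a single attack header on $G$, and a per-attack application of stability to lower-bound the profit $\pi_{A \cap C}$ available from services in $C$. Fix an arbitrary $D \in \mathbb D_\psi(C, G)$ and cascade $(A_1, B_1), \ldots, (A_T, B_T) \in \mc S(G \searrow D)$, and set $A := \bigcup_t A_t$, $B := \bigcup_t B_t$, and $L := \sigma_{B \cap \Gamma(C)}$. Since stable attacks are valid, \cref{cor:combine} gives that $(A, B)$ is a valid attack on $G \searrow D$, and \cref{lem:acombine} then gives that $(A, B \cup D)$, and hence $(A \cap C, B \cup D)$, is an attacking coalition on $G$. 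Taking $X := A \cap C \subseteq C$, $Y := (B \cup D) \cap \Gamma(X)$, and the witness $B^\star := (B \cup D) \setminus \Gamma(X)$ (so that $B^\star \cap \Gamma(X) = \emptyset$ and $B^\star \cup Y = B \cup D$) shows that $(X, Y)$ is an attack header, so the hypothesis gives $(1+\gamma)\pi_X \le \sigma_Y$.

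The main obstacle is to upper-bound $\pi_X$ by essentially $L$. Because the combined $(A, B)$ need not be stable, I would argue per-attack: for each $t$, let $A_t^{\mathrm{out}}$ be the subset of $A_t$ that $B_t \setminus \Gamma(C)$ can attack on its own in the intermediate graph $G_t := G \searrow (D \cup \bigcup_{i<t} B_i)$. Any service outside $C$ has no neighbor in $\Gamma(C)$ (since $v \in \Gamma(C)$ has all $G$-neighbors in $C$), so $A_t \setminus C \subseteq A_t^{\mathrm{out}}$, equivalently $A_t \setminus A_t^{\mathrm{out}} \subseteq A_t \cap C$. Case split: if $\pi_{A_t^{\mathrm{out}}} > \sigma_{B_t \setminus \Gamma(C)}$, then $(A_t^{\mathrm{out}}, B_t \setminus \Gamma(C))$ is a valid sub-attack of $(A_t, B_t)$ and stability of $(A_t, B_t)$ gives $\sigma_{B_t \cap \Gamma(C)} < \pi_{A_t \setminus A_t^{\mathrm{out}}} \le \pi_{A_t \cap C}$; otherwise, a short algebraic rearrangement of $\pi_{A_t} > \sigma_{B_t}$ using $\sigma_{B_t \setminus \Gamma(C)} \ge \pi_{A_t^{\mathrm{out}}}$ yields the same conclusion. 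Summing over $t$ by disjointness of the $B_t$'s and $A_t$'s gives $L < \pi_X$.

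Combining: using $\Gamma(X) \subseteq \Gamma(C)$, $B \cap D = \emptyset$, and $D \in \mathbb D_\psi(C, G)$, one has $\sigma_Y \le \sigma_{B \cap \Gamma(C)} + \sigma_{D \cap \Gamma(C)} \le L + \psi \sigma_{\Gamma(C)}$. Chaining with the header bound yields $(1+\gamma) L < (1+\gamma) \pi_X \le L + \psi \sigma_{\Gamma(C)}$, so $\gamma L < \psi \sigma_{\Gamma(C)}$, and since the cascade and shock were arbitrary, $R_\psi(C, G) < (1 + 1/\gamma)\psi$.

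For the local security condition clause, I would verify that the predicate ``$(1+\gamma)\pi_X \le \sigma_Y$ for every attack header $(X,Y)$ with $X \subseteq C$'' is invariant under passage to any $C$-local variant. For $X \subseteq C$, the quantities $\pi_X$, $\alpha_s$, and $\sigma_{N_G\{s\}}$ for $s \in X$ are service-level data in $C$, and candidates $Y \subseteq \Gamma(X) \subseteq \Gamma(C)$ consist of validators in $N_G C$. Moreover $(X, Y)$ is an attack header iff $\sigma_{Y \cap N_G\{s\}} + \sigma_{N_G\{s\} \setminus \Gamma(X)} \ge \alpha_s \sigma_{N_G\{s\}}$ for every $s \in X$ (the maximum possible outside help on each service), and membership in $\Gamma(X)$ for $v \in N_G\{s\}$ is determined by the neighborhood $N_G\{v\}$, which is locally visible. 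Hence the predicate depends only on $C$-local data.
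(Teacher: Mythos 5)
Your proposal is correct and follows essentially the same route as the paper: combine the cascade via \cref{cor:combine} and \cref{lem:acombine}, bound $\sigma_{B_t \cap \Gamma(C)}$ per attack by the same two-case stability argument (your $A_t^{\mathrm{out}}$ is the paper's $A'_t$), build an attack header that absorbs the shock $D$, and chain the inequalities. The only difference is cosmetic: you take $X = A \cap C$ and $Y = (B \cup D)\cap\Gamma(X)$ rather than the paper's $X=\bigcup_t(A_t\setminus A'_t)$ and $Y=(B\cup D)\cap\Gamma(C)$, which if anything satisfies the requirement $Y \subseteq \Gamma(X)$ in the attack-header definition more literally, and your locality argument matches the paper's.
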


\begin{proof}
    Let $D \in \mathbb D_\psi(C, G)$ and $(A_1, B_1), \dots, (A_T, B_T) \in \mc S(G \searrow D)$ be arbitrary. For each $t \in [T]$, define
    \begin{equation}
        L_t := B_t \cap \Gamma(C)
    \end{equation}
    to be the set of all validators exclusively securing $C$ that were lost in the $t\textsuperscript{th}$ attack. Next, define
    \begin{equation}
        A'_t := \set{s \in A_t \mid \sigma_{B_t \setminus L_t} \ge \alpha_s \cdot \sigma_{N_G \set{s} \setminus \pa{\bigcup_{i = 1}^{t - 1} B_i \cup D}}}
    \end{equation}
    to be the maximal set of services such that $(A'_t, B_t \setminus L_t)$ is an attacking coalition on $G \searrow \pa{\bigcup_{i = 1}^{t-1} B_i \cup D}$. Notice also that because $(A_t, B_t)$ is a stable attack on $G \searrow \pa{\bigcup_{i = 1}^{t-1} B_i \cup D}$, we must have that $A_t \setminus A'_t$ is nonempty, and furthermore must be a subset of $C$. 
    
    \begin{claim} \label{claim:stubalance}
        \begin{equation} \sigma_{\bigcup_t B_t \cap \Gamma(C)} < \pi_{\bigcup_t A_t \setminus A'_t} 
        \end{equation}
    \end{claim}
    \begin{proof}
    From stability, we have that
    \begin{equation}
        \sigma_{L_t} = \sigma_{B_t \setminus \pa{B_t \setminus L_t}}  < \pi_{A_t \setminus A'_t} 
    \end{equation}
    To see why this holds, notice that there are two cases. If the attacking coalition $(A'_t, B_t \setminus L_t)$ is a valid attack, then the inequality follows directly from the stability definition. If instead $(A'_t, B_t \setminus L_t)$ is not a valid attack despite being an attacking coalition, the inequality must still hold because the original attack~$(A_t,B_t)$ is valid and therefore satisfies \cref{eq:balance}. Iterating over $t$, we find that
    \begin{equation}
        \sigma_{\bigcup_t B_t \cap \Gamma(C)}  = \sum_{t = 1}^T \sigma_{L_t} < \sum_{t = 1}^T \pi_{A_t \setminus A'_t} = \pi_{\bigcup_t A_t \setminus A'_t} 
    \end{equation}
    \end{proof}

    \begin{claim} \label{claim:istub}
        $\pa{\bigcup_t \pa{A_t \setminus A'_t}, \pa{\bigcup_t B_t \cup D} \cap \Gamma(C)}$ is an attack header on $G$, and $\bigcup_t \pa{A_t \setminus A'_t} \subseteq C$.
    \end{claim}
    \begin{proof}
    Because each $A_t \setminus A'_t \subseteq C$, we must also have that $\bigcup_t A_t \setminus A'_t \subseteq C$ as well. Next, by applying \cref{cor:combine} noting the disjointness of the $(A_t, B_t)$, we have that
    \begin{equation}
        \pa{\bigcup_t A_t, \bigcup_t B_t} = \pa{\bigcup_t \pa{A_t \setminus A'_t} \cup \bigcup_t A'_t, \quad \bigcup_t B_t}
    \end{equation}
    must be an attacking coalition on $G \searrow D$. In particular, we must have that $\pa{\bigcup_t \pa{A_t \setminus A'_t}, \bigcup_t B_t}$
    is an attacking coalition on $G \searrow D$, whence by Lemma \ref{cor:combine} we have that $\pa{\bigcup_t \pa{A_t \setminus A'_t}, \bigcup_t B_t \cup D}$
    is an attacking coalition on $G$. Rewriting the above as 
    \begin{equation}
        \pa{\bigcup_t \pa{A_t \setminus A'_t}, \quad \bra{\pa{\bigcup_t B_t \cup D} \cap \Gamma(C)} \cup \bra{\pa{\bigcup_t B_t \cup D} \setminus \Gamma(C)}}
    \end{equation}
    and noting by construction that
    \begin{equation}
        \bra{\pa{\bigcup_t B_t \cup D} \setminus \Gamma(C)} \cap \Gamma(C) = \emptyset,
    \end{equation}
    we find that $\pa{\bigcup_t \pa{A_t \setminus A'_t}, \pa{\bigcup_t B_t \cup D} \cap \Gamma(C)}$ is an attack header on $G$ as desired.
    \end{proof}
    Putting these claims together yields the desired result. By Claim \ref{claim:istub} and \cref{eq:stubslack}, we find that
    \begin{equation}
        (1+\gamma) \pi_{\bigcup_t A_t \setminus A'_t}  \le \sigma_{\pa{\bigcup_t B_t \cup D} \cap \Gamma(C)}  \le \sigma_{\bigcup_t B_t \cap \Gamma(C)} + \psi \cdot \sigma_{\Gamma(C)}
    \end{equation}
    Adding in Claim \ref{claim:stubalance}, we find that
    \begin{align}
        (1+\gamma)\sigma_{\bigcup_t B_t \cap \Gamma(C)} &< \sigma_{\bigcup_t B_t \cap \Gamma(C)} + \psi \cdot \sigma_{\Gamma(C)}\\
        &\implies \frac{\sigma_{\bigcup_t B_t \cap \Gamma(C)}}{\sigma_{\Gamma(C)}} < \frac{\psi}{\gamma}\\
        &\implies R_\psi(C, G) < \pa{1 + \frac1\gamma}\psi
    \end{align}
    as desired. To see that the Boolean function that checks whether \cref{eq:stubslack} holds for all attack headers is a local security condition, observe that it suffices to check that for all $X \subseteq C$ and $Y \subseteq N_GC \cap \Gamma(C) = \Gamma(C)$ such that $\pa{X, Y \cup N_GC \setminus \Gamma(C)}$ is an attacking coalition, \cref{eq:stubslack} holds. Thus, for any restaking graph $G$ and $C$-local variant $G'$, this function will evaluate to the same output.
\end{proof}

The condition in~\eqref{eq:stubslack} can be checked via enumeration, although the time required to do so grows exponentially in the number of services in~$C$ and the number of validators that contribute security exclusively to services in~$C$. 
In the event that this is a prohibitive amount of computation,
the same guarantee holds under a stronger, easily checked local analog of the EigenLayer sufficient condition~\eqref{eq:EL} which, in effect, treats as malicious all validators that contribute security to any services outside of $C$.

\begin{corollary}\label{cor:local}
    For any restaking graph $G = (S,V,E,\pi,\sigma,\alpha)$ and coalition of services $C \subseteq S$, satisfaction of the local condition
    \begin{equation}\label{eq:ELlocal}
        \sum_{s \in N_G\set{v}} \frac{\sigma_v}{\sigma_{N_G\pa{\set{s}}}} \cdot \frac{(1 + \gamma) \pi_s}{\alpha'_s} \le \sigma_v 
    \end{equation}
    for all $v \in N_GC$ with $N_G\set{v} \subseteq C$ guarantees that $R_\psi(C, G) < \pa{1 + \frac1\gamma} \psi$, where for each $s \in C$,
    \begin{equation}
        \alpha'_s := \alpha_s - \frac{\sigma_{N_G\set{s} \setminus \Gamma(C)}}{\sigma_{N_G\set{s}}}
    \end{equation}
\end{corollary}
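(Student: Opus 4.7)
The plan is to reuse the proof architecture of \cref{thm:local}, replacing its single invocation of \cref{eq:stubslack} with a direct argument from \cref{eq:ELlocal}. The proof of \cref{thm:local} reduces everything to a single inequality $(1+\gamma)\pi_X \le \sigma_Y$, where $X = \bigcup_t(A_t \setminus A'_t) \subseteq C$ and $Y = (\bigcup_t B_t \cup D) \cap \Gamma(C)$ come from an arbitrary shock $D \in \mathbb D_\psi(C, G)$ and cascade of stable attacks $(A_1, B_1), \dots, (A_T, B_T) \in \mc S(G \searrow D)$; I would derive this inequality directly from \cref{eq:ELlocal} and then let the rest of \cref{thm:local}'s proof carry over unchanged. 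In particular, the definitions of $L_t$ and $A'_t$, \cref{claim:stubalance}, and the intermediate fact that $(X, \bigcup_t B_t \cup D)$ is an attacking coalition on $G$ (established en route to \cref{claim:istub}) all reuse without modification, since none of them invoke any overcollateralization condition.

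The key step is to sum \cref{eq:ELlocal} over $v \in Y \subseteq \Gamma(C)$, swap the order of summation, and discard the nonnegative terms with $s \notin X$, obtaining
\begin{align*}
\sigma_Y \;\ge\; \sum_{s \in X} \frac{(1+\gamma)\pi_s \cdot \sigma_{Y \cap N_G\set{s}}}{\alpha'_s \,\sigma_{N_G\set{s}}}.
\end{align*}
It then remains to check $\sigma_{Y \cap N_G\set{s}} \ge \alpha'_s \sigma_{N_G\set{s}}$ for each $s \in X$. I would do this by decomposing $(\bigcup_t B_t \cup D) \cap N_G\set{s}$ into its parts inside and outside $\Gamma(C)$: the inside part is exactly $Y \cap N_G\set{s}$, the outside part has stake at most $\sigma_{N_G\set{s} \setminus \Gamma(C)}$, and their total is at least $\alpha_s \sigma_{N_G\set{s}}$ by the attacking-coalition property. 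Solving for $\sigma_{Y \cap N_G\set{s}}$ and invoking the definition of $\alpha'_s$ completes this step, giving $\sigma_Y \ge (1+\gamma)\pi_X$; combining with \cref{claim:stubalance} and $\sigma_{D \cap \Gamma(C)} \le \psi \sigma_{\Gamma(C)}$ then closes the argument exactly as in \cref{thm:local}.

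The subtlety to watch---and the reason this corollary cannot be obtained as a black-box reduction to \cref{thm:local}---is that \cref{eq:ELlocal} does \emph{not} imply \cref{eq:stubslack} for every attack header $(X, Y)$ with $X \subseteq C$. A general attack header may choose a witness $B$ disjoint only from $\Gamma(X)$, so it is free to absorb extra stake from $\Gamma(C) \setminus \Gamma(X)$, and small examples show that \cref{eq:stubslack} can then fail even when \cref{eq:ELlocal} holds. What saves the argument here is that the particular attack header \cref{thm:local}'s proof produces uses the witness $(\bigcup_t B_t \cup D) \setminus \Gamma(C)$, which lies outside $\Gamma(C)$ and therefore contributes at most $\sigma_{N_G\set{s} \setminus \Gamma(C)}$ to any service $s$---precisely the correction built into $\alpha'_s$.
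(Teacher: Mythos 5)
Your proposal is correct, but it takes a genuinely more hands-on route than the paper. The paper's proof of \cref{cor:local} is a two-sentence black-box reduction: it reads \cref{eq:ELlocal} as the EigenLayer condition of \cref{claim:EL} applied to a modified instance (profits of services in $C$ inflated by $(1+\gamma)$, thresholds lowered to $\alpha'_s$), asserts that this makes every attack header $(X,Y)$ with $X\subseteq C$ satisfy \cref{eq:stubslack}, and then cites \cref{thm:local}. You instead inline both steps: you re-run the proof of \cref{thm:local} and verify the one inequality it needs, $(1+\gamma)\pi_X\le\sigma_Y$, for the specific pair it constructs, by summing \cref{eq:ELlocal} over $v\in Y$, swapping sums, and lower-bounding $\sigma_{Y\cap N_G\set{s}}\ge\alpha'_s\sigma_{N_G\set{s}}$ via the fact that the witness $\pa{\bigcup_t B_t\cup D}\setminus\Gamma(C)$ contributes at most $\sigma_{N_G\set{s}\setminus\Gamma(C)}$ to each $s$ --- in effect reproving \cref{claim:EL} in the only case needed. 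Your cautionary remark is also substantively right, not just a stylistic quibble: under the paper's literal definition, a header's witness $B$ need only avoid $\Gamma(X)$, so it may absorb validators of $\Gamma(C)\setminus\Gamma(X)$ whose stake is not discounted by $\alpha'_s$, and then \cref{eq:ELlocal} really can hold while \cref{eq:stubslack} fails (e.g., services $s_1,s_2$ with $\alpha\equiv 1$, a validator $u$ of stake $1$ on $s_1$ only, a validator $w$ of stake $100$ on both, $(1+\gamma)\pi_{s_1}=50$ and $\pi_{s_2}$ tiny: the header $\pa{\set{s_1},\set{u}}$ with witness $\set{w}$ violates \cref{eq:stubslack} although \cref{eq:ELlocal} holds). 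The paper's reduction implicitly treats headers whose witness is disjoint from all of $\Gamma(C)$, which are exactly the ones produced in the proof of \cref{thm:local}, so your inlined argument is the rigorous rendering of what the paper intends: the paper's route buys brevity, while yours buys an argument that is airtight against this definitional subtlety (at the shared, implicit cost in both proofs that $\alpha'_s>0$ for the condition to make sense).
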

\begin{proof}
    This follows from \cref{thm:local} and \cref{claim:EL}. Notice that to guarantee that the condition (\ref{eq:stubslack}) from \cref{thm:local} holds for all attack headers $(X,Y)$, it suffices to guarantee that no valid attacks exist, when (i) all profits from corruption for services in $C$ are inflated by a multiplicative factor of $(1+\gamma)$, and (ii) the fraction of stake required to corrupt a given service is offset by the fraction of stake in that service that is also restaking for services that do not belong to $C$. As \cref{claim:EL} provides a sufficient condition to guarantee the existence of no valid attacks, \cref{eq:ELlocal} will guarantee that all attack headers are overcollateralized by a multiplicative factor of $(1 + \gamma)$.
\end{proof}

\section{Lower Bounds for Local Security}\label{sec:local_lb}

We next show senses in which \cref{thm:local} (and more strongly, \cref{cor:local}) is tight.

\begin{corollary}\label{cor:localtight}
For any $\psi, \gamma, \epsilon > 0$ such that
    \begin{equation}
        0 \le \pa{1 + \frac1\gamma}\psi - \epsilon \le 1,
    \end{equation}
    there exists a restaking graph $G$ that satisfies the condition (\ref{eq:ELlocal}) from \cref{cor:local} but has $R_\psi(C, G) \ge \pa{1 + \frac1\gamma}\psi - \epsilon$.
\end{corollary}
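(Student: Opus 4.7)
My plan is to reuse the construction from the proof of \cref{thm:noslack} and set the coalition to be the entire service set $C = S$. With this choice, the local framework collapses to the global one, and the lower-bound argument transfers cleanly.

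First I would instantiate $G$ exactly as in \cref{thm:noslack}: three validators $V = \set{a, b, c}$, one service $S = \set{x}$, edges $(x,a)$ and $(x,b)$, and the same stakes and profit chosen there (giving $\sigma_V = \sigma_a/\psi$ and making $(\set{x}, \set{b})$ a valid attack on $G \searrow \set{a}$). I would then set $C := \set{x}$ and verify that $\Gamma(C) = V$: the validators $a$ and $b$ satisfy $N_G\set{v} = \set{x} \subseteq C$, while $c$ lies in $\Gamma(C)$ vacuously because $N_G\set{c} = \emptyset \subseteq C$.

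Next I would check the three ways in which the local setup differs from the global one, showing each collapses when $\Gamma(C) = V$. First, the local shock class $\mathbb D_\psi(C, G)$ coincides with $\mathbb D_\psi(G)$, so the shock $D = \set{a}$ still lies in it. Second, the local loss ratio $\sigma_{\bigcup_t B_t \cap \Gamma(C)}/\sigma_{\Gamma(C)}$ equals the global ratio $\sigma_{\bigcup_t B_t}/\sigma_V$. Third, since $N_G\set{s} \setminus \Gamma(C) = \emptyset$ for every $s \in C$, the corrected threshold $\alpha'_s$ from \cref{cor:local} equals $\alpha_s$, so condition~(\ref{eq:ELlocal}) reduces to~(\ref{eq:ELslack}) at every $v \in N_GC = \set{a, b}$; this was already verified in the proof of \cref{thm:noslack}. (Validator $c$ lies outside $N_GC$ and so is not constrained by \cref{cor:local}.)

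The only new verification is that the single-attack cascade $(\set{x}, \set{b})$ on $G \searrow D$ lies in $\mc S(G \searrow D)$, not merely in $\mc C(G \searrow D)$. This is immediate: the proper sub-pairs of $(\set{x}, \set{b})$ are $(\emptyset, \emptyset)$, $(\set{x}, \emptyset)$, and $(\emptyset, \set{b})$, none of which is a valid attack (the first and third have $\pi_{A'} = 0 \not> \sigma_{B'}$, and the second fails the threshold requirement \cref{eq:threat} since $\alpha_x = 1$ but $\emptyset$ controls zero stake in $N_{G \searrow D}\set{x} = \set{b}$). Hence stability~(\ref{eq:stable}) is vacuous on all proper sub-pairs, and the identical arithmetic as in the proof of \cref{thm:noslack} yields $R_\psi(C, G) \ge \pa{1 + \tfrac1\gamma}\psi - \epsilon$. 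I do not foresee a real obstacle; the corollary is essentially the observation that the global lower-bound example doubles as a local one once the coalition swallows the entire graph.
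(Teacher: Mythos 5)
Your proposal is correct and follows essentially the same route as the paper: take $C=S$ in the construction from \cref{thm:noslack}, observe that with $\Gamma(C)=V$ the local shock class, loss ratio, and condition~(\ref{eq:ELlocal}) all collapse to their global counterparts, and note that the attack $(\set{x},\set{b})$ is stable. Your explicit check of stability over the proper sub-pairs is a detail the paper merely asserts, so nothing is missing.
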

\begin{proof}
    Notice that if we take $C = S$, the condition (\ref{eq:ELlocal}) from \cref{cor:local} is identical to the condition (\ref{eq:ELslack}) from \cref{cor:ELslack}. Furthermore, $R_\psi(S, G)$ is the same as $R_\psi(G)$ except for the fact that $R_\psi(S, G)$ considers only stable attacks. Repeating the argument from \cref{thm:noslack}, and noting that the attack given in the proof of that result is stable, we obtain the desired result. 
\end{proof}

\begin{theorem}\label{thm:globalinsufficient}
    For any $\gamma > 0$, there exists a graph restaking graph $G = (S, V, E, \pi, \sigma, \alpha)$ that satisfies (\ref{eq:ELslack}) from \cref{cor:ELslack}, but there exists a $C \subseteq S$ such that $R_0(C, G) = 1$.
\end{theorem}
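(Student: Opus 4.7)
The plan is to exhibit a minimal restaking graph with a single service in the coalition $C$, a single exclusive validator of $C$, and a single shared validator whose ``free'' removal lets the exclusive validator be wiped out by a profitable stable attack. Specifically, take $S = \set{s, t}$, $V = \set{v_1, u}$, $E = \set{(s, v_1), (s, u), (t, u)}$, and $C = \set{s}$; fix a small $\epsilon > 0$ and set $\sigma_{v_1} = 1$, $\sigma_u = (1+\gamma)(1+\epsilon) - 1$, $\pi_s = 1 + \epsilon$, $\pi_t = 0$, and $\alpha_s = \alpha_t = 1$. Note $\sigma_u > 0$ since $\gamma, \epsilon > 0$. Because $u$'s neighborhood includes $t \notin C$, we have $\Gamma(C) = \set{v_1}$.

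The first step is to verify~(\ref{eq:ELslack}). Since $\sigma_{N_G(\set{s})} = 1 + \sigma_u = (1+\gamma)(1+\epsilon)$, a direct calculation shows that $v_1$'s sum evaluates to $\frac{1}{(1+\gamma)(1+\epsilon)} \cdot (1+\gamma)(1+\epsilon) = 1 = \sigma_{v_1}$, while $u$'s sum is $\sigma_u$ from $s$ plus zero from $t$ (since $\pi_t = 0$); both inequalities hold with equality, so~(\ref{eq:ELslack}) holds. Next, take the shock $D = \set{u}$: because $u \notin \Gamma(C)$ we have $D \cap \Gamma(C) = \emptyset$, and hence $D \in \mathbb{D}_0(C, G)$. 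On $G \searrow \set{u}$, the unique neighbor of $s$ is $v_1$, so $(\set{s}, \set{v_1})$ is an attacking coalition (since $\sigma_{v_1} = 1 \ge \alpha_s \sigma_{v_1}$) and is valid because $\pi_s = 1+\epsilon > 1 = \sigma_{v_1}$. It is stable since the only candidate proper sub-attacks $(\emptyset, \emptyset)$ and $(\set{s}, \emptyset)$ are not valid (the former has zero profit; the latter is not an attacking coalition). Executing this attack slashes $v_1$, the unique member of $\Gamma(C)$, giving $R_0(C, G) \geq \sigma_{v_1}/\sigma_{\Gamma(C)} = 1$, and hence equality.

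The main obstacle is the simultaneous calibration of $\sigma_u$ and $\pi_s$: tightening~(\ref{eq:ELslack}) pushes $\sigma_u$ upward relative to $\pi_s$, while having $v_1$ alone attack $s$ profitably after the shock requires $\pi_s > \sigma_{v_1}$. The pairing $\sigma_u = (1+\gamma)(1+\epsilon) - 1$ and $\pi_s = 1+\epsilon$ threads this needle by making the global slack condition tight for both validators while leaving the strict $\epsilon$-margin needed for validity of the post-shock attack on the coalition.
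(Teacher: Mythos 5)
Your construction is correct and follows essentially the same route as the paper's proof: exhibit an explicit small graph satisfying (\ref{eq:ELslack}) in which a shock that removes only validators outside $\Gamma(C)$ (here the shared validator $u$) leaves the exclusive validator able to launch a profitable stable attack, giving $R_0(C,G)=1$. The only cosmetic differences are that the paper uses a three-service cycle with all profits positive whereas you use a zero-profit service $t$ (permitted by the model, and avoidable by giving $u$ slightly more stake and $t$ a tiny positive profit), and your stability check should also mention the subpair $(\emptyset,\set{v_1})$, which is likewise not a valid attack.
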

\begin{proof}
    Let there be three services $S = \set{x, y , z}$, three validators $V = \set{a, b, c}$, and edges as follows:
    \begin{equation}
        E := \set{(x, a), (x, b), (y, b), (y, c), (z, c), (z, a)}.
    \end{equation}
    Next, let $\alpha_s = 1$ for all $s \in S$, and let $\pi_x = \pi_y = \pi_z =: \pi > 0$ be an arbitrary positive constant. Next pick $\sigma_a$ such that
    \begin{equation}
        \sigma_a < 2\pi,
    \end{equation}
    let $\sigma_b := 2(1 + \gamma)\pi$, and let $\sigma_c := 2(1 + \gamma)\pi$. This graph satisfies (\ref{eq:ELslack}) from \cref{cor:ELslack} as
    \begin{align}
        \frac{\sigma_a}{\sigma_a + \sigma_b} \cdot (1+\gamma) \pi + \frac{\sigma_a}{\sigma_a + \sigma_c} \cdot (1 + \gamma) \pi < (1 + \gamma)\pi\pa{\frac{1}{\sigma_b} + \frac{1}{\sigma_c}}\sigma_a = \sigma_a\\
        \frac{\sigma_b}{\sigma_a + \sigma_b} \cdot (1+\gamma) \pi + \frac{\sigma_b}{\sigma_b + \sigma_c} \cdot (1 + \gamma) \pi < \frac32 (1 + \gamma)\pi < \sigma_b\\
        \frac{\sigma_c}{\sigma_a + \sigma_c} \cdot (1+\gamma) \pi + \frac{\sigma_c}{\sigma_b + \sigma_c} \cdot (1 + \gamma) \pi < \frac32 (1 + \gamma)\pi < \sigma_c
    \end{align}
    Next, let $C := \set{x, z}$, and observe that the shock $D = \set{b, c}$ is an element of $\mathbb D_0(C, G)$. However, because $\sigma_a < \pi_x + \pi_z = 2\pi$, we must have that $(\set{x, z}, \set{a})$ is a stable attack on $G \searrow D$. As $\set{a} = \Gamma(C)$, it follows that $R_0(C, G) = 1$.
\end{proof}

\section{Long Cascades}\label{sec:length}

\paragraph{Cascade Structure.} Although \cref{cor:combine} shows that all long cascades can be made into a short, one-step attack, it is arguably more dangerous if large attacks can be made through long cascades of small attacks that each require less coordination. In this section, we show how adding $\gamma$-slack also enables us to upper-bound the length of a cascade in the worst case. Our results depend on what we call the \textit{reference depth} of a cascading sequence of attacks. Formally, for a given restaking graph $G$ and coalition of validators $B_0$, we say that a valid cascade of attacks $(A_1, B_1), \dots, (A_T, B_T)$ on $\mc C(G\searrow {B_0})$ has reference depth $k$ if

\begin{equation}
    k = \max \set{i \in [T] \mid \exists  t \in [T] \text{ s.t. } N_G(A_t) \cap B_{t - i} \ne \emptyset}
\end{equation}

In other words, a cascading sequence has reference depth $k$ if the services attacked in a given time step are not affected by validators that were slashed more than $k$ steps previous. 

\begin{theorem} \label{thm:length}
    Suppose that a restaking graph $G = (S, V, E, \pi, \sigma, \alpha)$ is secure with $\gamma$-slack for some $\gamma > 0$. Let $\epsilon = \min_{v \in V} \sigma_v$ denote the minimum stake held by a validator. Then, for any $\psi > 0$, $B_0 \in \mathbb D_\psi(G)$, and $(A_1, B_1), \dots, (A_T, B_T) \in \mc C(G \searrow {B_0})$ with reference depth $k$, 
    \begin{equation}
        T < k\pa{1 + \log_{1 + \gamma}\frac{\psi \cdot \sigma_V}{\epsilon \gamma}}
    \end{equation}
\end{theorem}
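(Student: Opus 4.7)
The plan is to partition the cascade into generations of $k$ consecutive attacks, combine each ``suffix'' of generations into a single attack on~$G$, and then unfold the geometric recurrence this yields on the suffix sums of stake lost. Set $T' := \lceil T/k \rceil$ and, for each $j \in [T']$, let $\hat A_j := \bigcup_{t=(j-1)k+1}^{\min(jk,T)} A_t$ and $\hat B_j := \bigcup_{t=(j-1)k+1}^{\min(jk,T)} B_t$ be the $j$-th generation. Each sub-cascade inside generation~$j$ is itself a valid cascade on $G^{(j-1)} := G \searrow (B_0 \cup \hat B_1 \cup \cdots \cup \hat B_{j-1})$, so \cref{cor:combine} gives that $(\hat A_j, \hat B_j)$ is a valid attack on $G^{(j-1)}$, and applying the same corollary to the grouped cascade $(\hat A_j, \hat B_j), \ldots, (\hat A_{T'}, \hat B_{T'})$ shows that the ``suffix attack'' $(\bar A_j, \bar B_j) := (\bigcup_{l \ge j} \hat A_l, \bigcup_{l \ge j} \hat B_l)$ is a valid attack on $G^{(j-1)}$ as well.

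Next, I would use the reference-depth hypothesis to identify which previously-slashed validators are actually adjacent to $\bar A_j$. For each original attack time $t$ in a generation $l \ge j$, we have $t > (j-1)k$, so the reference-depth definition forces $N_G(A_t) \cap B_{t'} = \emptyset$ whenever $t - t' > k$; this implies $N_G(\bar A_j) \cap \hat B_{j'} = \emptyset$ for every $j' \le j-2$, and for $j \ge 2$ it also gives $N_G(\bar A_j) \cap B_0 = \emptyset$. Applying \cref{lem:acombine} to $(\bar A_j, \bar B_j)$ on $G^{(j-1)}$ produces the attacking coalition $(\bar A_j, \bar B_j \cup B_0 \cup \hat B_1 \cup \cdots \cup \hat B_{j-1})$ on~$G$; since none of $B_0, \hat B_1, \dots, \hat B_{j-2}$ meet $N_G(\bar A_j)$, dropping them from the attacker set preserves the attacking-coalition inequalities, so $(\bar A_j, \bar B_j \cup \hat B_{j-1})$ is an attacking coalition on~$G$ for every $j \ge 2$, and $(\bar A_1, \bar B_1 \cup B_0)$ is an attacking coalition on~$G$ for $j = 1$.

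Feeding these coalitions into the $\gamma$-slack hypothesis $(1+\gamma)\pi \le \sigma$ and combining with the validity $\pi_{\bar A_j} > \sigma_{\bar B_j}$ gives $\gamma \sigma_{\bar B_j} < \sigma_{\hat B_{j-1}}$ for $j \ge 2$ and $\gamma \sigma_{\bar B_1} < \sigma_{B_0}$. Writing $S_j := \sigma_{\bar B_j} = \sum_{l \ge j} \sigma_{\hat B_l}$ and noting that $\sigma_{\hat B_{j-1}} = S_{j-1} - S_j$, the $j \ge 2$ inequality rearranges to $(1+\gamma) S_j < S_{j-1}$, which iterates to $S_{T'} < S_1/(1+\gamma)^{T'-1}$. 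Chaining with $S_1 < \sigma_{B_0}/\gamma \le \psi\sigma_V/\gamma$ and the trivial lower bound $S_{T'} = \sigma_{\hat B_{T'}} \ge \epsilon$ yields $\epsilon \gamma (1+\gamma)^{T'-1} < \psi \sigma_V$; taking $\log_{1+\gamma}$ and using $T \le kT'$ then produces the claimed bound $T < k\pa{1 + \log_{1+\gamma}(\psi\sigma_V/(\epsilon\gamma))}$.

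The step I expect to require the most care is the realization that one must combine \emph{suffixes} of generations rather than individual generations: the naive one-generation-at-a-time argument only yields $\gamma \sigma_{\hat B_j} < \sigma_{\hat B_{j-1}}$, which iterates to a $\log_\gamma$-type bound that is strictly weaker than $\log_{1+\gamma}$ and becomes vacuous for $\gamma \le 1$. Working with the suffix sums $S_j$ upgrades the inequality $\gamma S_j < S_{j-1} - S_j$ into the additive form $(1+\gamma) S_j < S_{j-1}$, which is precisely what puts the base $1+\gamma$ into the final logarithm.
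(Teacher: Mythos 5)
Your proposal is correct and follows essentially the same argument as the paper: the same partition into generations of $k$ attacks, the same use of \cref{cor:combine} to form suffix attacks, the same reference-depth observation to drop all but the immediately preceding generation (or $B_0$) from the attacking coalition on $G$, and the same $\gamma$-slack recurrence $(1+\gamma)S_j < S_{j-1}$ on the suffix stake sums leading to geometric growth and the $\log_{1+\gamma}$ bound. The only differences are cosmetic (pruning non-adjacent validators after \cref{lem:acombine} rather than first restricting to $G \searrow \hat B_{j-1}$, and iterating the suffix sums directly instead of the paper's auxiliary sequence $X_i$).
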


\begin{proof}
    Because $(A_1, B_1), \dots, (A_T, B_T) \in \mc C(G \searrow {B_0})$ has reference depth $k$, it follows that if we define
    \begin{align}
        A'_i &:= \bigcup_{t = k(i-1) + 1}^{ki} A_t\\
        B'_i &:= \bigcup_{t = k(i-1) + 1}^{ki} B_t\\
        B'_0 &:= B_0
    \end{align}
    for $i \in \set{1, \dots, \ceil{T/k}}$, we must have that for all $j < i - 1$,
    \begin{equation}\label{eq:canonical}
        N_GA'_i \cap B'_j = \emptyset
    \end{equation}
    Furthermore, by \cref{cor:combine}, we must have that the sequence $(A'_i, B'_i) \in \mc C(G\searrow B_0)$. Applying \cref{cor:combine} again, we further have that for every $i$, $\pa{\bigcup_{j = i + 1}^{\ceil{T/k}} A'_i, \bigcup_{j = i + 1}^{\ceil{T/k}} B'_i}$ is a valid attack on $G \searrow \bigcup_{j = 0}^{i} B'_i$. It follows by \cref{eq:balance} that for any $i \in \set{0, \dots,\ceil{T/k} - 1}$,
    \begin{align}\label{eq:iterbalance}
        \pi_{\bigcup_{j = i + 1}^{\ceil{T/k}} A'_j} > \sigma_{\bigcup_{j = i + 1}^{\ceil{T/k}} B'_j} = \sum_{j = i + 1}^{\ceil{T/k}} \sigma_{B'_j}  
    \end{align}
    Next, noting that 
    \begin{equation}
        N_G \bigcup_{j = i+1}^{\ceil{T/k}} A_i' \cap \bigcup_{j = 1}^{i - 1} B_i' = \emptyset
    \end{equation}
    by \cref{eq:canonical}, it follows that indeed $\pa{\bigcup_{j = i + 1}^{\ceil{T/k}} A'_i, \bigcup_{j = i + 1}^{\ceil{T/k}} B'_i}$ is also a valid attack on $G \searrow B_i'$. Applying \cref{lem:acombine}, we then find that $\pa{\bigcup_{j = i + 1}^{\ceil{T/k}} A'_i, B'_i \cup \bigcup_{j = i + 1}^{\ceil{T/k}} B'_i}$ is an attacking coalition on the original graph $G$. Because $G$ is secure with $\gamma$-slack, \cref{eq:slack} must now hold on this pair, whence for all $i \in \set{0, \dots, \ceil{T/k} - 1}$,
    \begin{equation}
        (1+\gamma) \pi_{\bigcup_{j = i + 1}^{\ceil{T/k}} A'_j} \le \sigma_{\bigcup_{j = i}^{\ceil{T/k}} B'_j} = \sigma_{B'_i} + \sum_{j = i+1}^{\ceil{T/k}} \sigma_{B'_j}
    \end{equation}
    Putting this together with \cref{eq:iterbalance}, we have that for all $i \in \set{0, \dots, \ceil{T/k} - 1}$,
    \begin{align}
        &(1 + \gamma) \sum_{j = i+1}^{\ceil{T/k}} \sigma_{B'_j} <  \sigma_{B'_i} + \sum_{j = i+1}^{\ceil{T/k}} \sigma_{B'_j}\\
        &\implies \sigma_{B'_i}  > \gamma \sum_{j = i+1}^{\ceil{T/k}} \sigma_{B'_j} \label{eq:itergamma}
    \end{align}
    It can be shown inductively from \cref{eq:itergamma} that the sequence $\sigma_{B'_i} > X_i$, where we define the sequence $X_i$ by
    \begin{align}
        X_{\ceil{T/k}} &:= \epsilon\\
        X_i &:= \gamma \sum_{j = i+1}^{\ceil{T/k}} X_j && i \in \set{0, \dots, \ceil{T/k} - 1}
    \end{align}
    Solving, we find that
    \begin{equation}
        X_0 = \epsilon \gamma (1 + \gamma)^{\ceil{T/k} - 1}
    \end{equation}
    It then follows as desired that
    \begin{align}
        &\psi \cdot \sigma_V = \sigma_{B_0} > \epsilon \gamma(1+\gamma)^{T/k - 1} \implies T < k\pa{1 + \log_{1 + \gamma}\frac{\psi \cdot \sigma_V}{\epsilon \gamma}}
    \end{align}
\end{proof}

\subsection*{Acknowledgments}

We thank Tarun Chitra, Soubhik Deb, Sreeram Kannan, Mike Neuder, and
Mallesh Pai for comments on earlier drafts of this paper.  We thank
Soubhik and Sreeram in particular for emphasizing the importance of
local guarantees.

This research was supported in part by
NSF awards CCF-2006737 and CNS-2212745, and research awards from the
Briger Family Digital Finance Lab and the Center for Digital Finance
and Technologies.

\bibliographystyle{plain}
\bibliography{main}

\begin{thebibliography}{10}

\bibitem{acemoglu2015networks}
Daron Acemoglu, Asuman Ozdaglar, and Alireza Tahbaz-Salehi.
\newblock Networks, shocks, and systemic risk.
\newblock Technical report, National Bureau of Economic Research, 2015.

\bibitem{acemoglu2015systemic}
Daron Acemoglu, Asuman Ozdaglar, and Alireza Tahbaz-Salehi.
\newblock Systemic risk and stability in financial networks.
\newblock {\em American Economic Review}, 105(2):564--608, 2015.

\bibitem{alexander2024leveraged}
Carol Alexander.
\newblock Leveraged restaking of leveraged staking: What are the risks?
\newblock {\em Available at SSRN 4840805}, 2024.

\bibitem{battiston2016price}
Stefano Battiston, Guido Caldarelli, Robert~M May, Tarik Roukny, and Joseph~E Stiglitz.
\newblock The price of complexity in financial networks.
\newblock {\em Proceedings of the National Academy of Sciences}, 113(36):10031--10036, 2016.

\bibitem{brunnermeier2012risk}
Markus~K Brunnermeier, Gary Gorton, and Arvind Krishnamurthy.
\newblock Risk topography.
\newblock {\em Nber macroeconomics annual}, 26(1):149--176, 2012.

\bibitem{chen2013axiomatic}
Chen Chen, Garud Iyengar, and Ciamac~C Moallemi.
\newblock An axiomatic approach to systemic risk.
\newblock {\em Management Science}, 59(6):1373--1388, 2013.

\bibitem{deb2024stakesure}
Soubhik Deb, Robert Raynor, and Sreeram Kannan.
\newblock Stakesure: Proof of stake mechanisms with strong cryptoeconomic safety.
\newblock {\em arXiv preprint arXiv:2401.05797}, 2024.

\bibitem{diamond1983bank}
Douglas~W Diamond and Philip~H Dybvig.
\newblock Bank runs, deposit insurance, and liquidity.
\newblock {\em Journal of political economy}, 91(3):401--419, 1983.

\bibitem{eisenberg2001systemic}
Larry Eisenberg and Thomas~H Noe.
\newblock Systemic risk in financial systems.
\newblock {\em Management Science}, 47(2):236--249, 2001.

\bibitem{feinstein2017measures}
Zachary Feinstein, Birgit Rudloff, and Stefan Weber.
\newblock Measures of systemic risk.
\newblock {\em SIAM Journal on Financial Mathematics}, 8(1):672--708, 2017.

\bibitem{glasserman2016contagion}
Paul Glasserman and H~Peyton Young.
\newblock Contagion in financial networks.
\newblock {\em Journal of Economic Literature}, 54(3):779--831, 2016.

\bibitem{khot2016hardness}
Subhash Khot and Rishi Saket.
\newblock Hardness of bipartite expansion.
\newblock In {\em 24th Annual European Symposium on Algorithms (ESA 2016)}. Schloss-Dagstuhl-Leibniz Zentrum f{\"u}r Informatik, 2016.

\bibitem{kromer2016systemic}
Eduard Kromer, Ludger Overbeck, and Katrin Zilch.
\newblock Systemic risk measures on general measurable spaces.
\newblock {\em Mathematical Methods of Operations Research}, 84:323--357, 2016.

\bibitem{blogpost}
Tarun~Chitra Mike~Neuder.
\newblock The risks of lrts.
\newblock {\em URL: https://ethresear.ch/t/the-risks-of-lrts/18799}, 2024.

\bibitem{el}
EigenLayer Team.
\newblock Eigenlayer: The restaking collective.
\newblock {\em URL: https://docs. eigenlayer. xyz/overview/whitepaper}, 2023.

\end{thebibliography}

\newpage
\appendix

\section{Lower Bounds under a Bounded Profit-Stake Ratio}

The example exhibited in \cref{thm:gn} requires that for the whole graph to be attacked due to an arbitrarily small shock $R_\epsilon(G) = 1$, there is a service whose profit from corruption is arbitrarily larger than the stake of some validator (i.e. $\pi_s/\sigma_v$ for a service $s$ and validator $v$). In this section, we show that equally strong lower bounds for both global and local security exist even when all of the services are ``small'', in the sense that an absolute constant bounds the ratio of the profit from corruption of any service and the stake of any validator. 

\begin{theorem}\label{thm:localtight2}
    For any $0 < \epsilon < 1$, there exists a secure restaking graph $G = (S, V, E, \pi, \sigma, \alpha)$ that meets the EigenLayer condition (\ref{eq:EL}),
    \begin{equation}\label{eq:bounded}
        \max_{(s, v) \in S \times V} \pi_s/\sigma_v = 2,
    \end{equation}
    but there exist proper subsets $C \subset S$ such that $R_0(C, G) = 1$. Furthermore, for any $\psi \ge \epsilon$, $R_{\psi}(S, G) = 1$.
\end{theorem}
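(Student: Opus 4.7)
The plan is to build a chain-structured graph that amplifies an arbitrarily small shock into a complete cascade while keeping every profit-to-stake ratio bounded by $2$. Fix a large integer $n$ (to be chosen as a function of $\epsilon$) and let $\tau := 3/(3n-2)$. Take $n+1$ validators $v_0, v_1, \ldots, v_n$ with stakes $\sigma_{v_i} := 1 + (n-i)\tau$ (so stakes decrease linearly from $1+n\tau$ down to $1$), and $n$ services $s_1, \ldots, s_n$ with $\alpha_{s_i} := 1$ and $\pi_{s_i} := \sigma_{v_i} + \tau/3$, where each $s_i$ is adjacent to exactly $v_{i-1}$ and $v_i$. The choice of $\tau$ makes $\pi_{s_1} = 2$ exactly while $\sigma_{v_n} = 1$ is the minimum stake, so $\max_{s,v}\pi_s/\sigma_v = 2$, achieved at $(s_1, v_n)$.

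First I would verify the EigenLayer condition (\ref{eq:EL}), from which security follows by \cref{claim:EL}. At the endpoints $v_0$ and $v_n$, which each sit in a single service, the condition is routine. At an interior $v_i$ (for $1 \le i \le n-1$), dividing through by $\sigma_{v_i}$ reduces it to
\begin{equation*}
  \frac{\pi_{s_i}}{\sigma_{v_{i-1}}+\sigma_{v_i}} + \frac{\pi_{s_{i+1}}}{\sigma_{v_i}+\sigma_{v_{i+1}}} \le 1.
\end{equation*}
Substituting the explicit formulas and clearing denominators, the gap between the two sides evaluates to $(2/3)\tau(1+(n-i)\tau) > 0$, so the inequality is strict. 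In effect, the linearly decreasing stakes provide precisely the slack needed to accommodate the $\tau/3$ cascade-enabling excess $\pi_{s_i} - \sigma_{v_i}$ simultaneously in both of an interior validator's services.

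For the global claim, I would use the shock $D := \{v_0\}$. A direct computation gives $\sigma_V = (n+1)(9n-4)/(2(3n-2))$ and hence $\sigma_{v_0}/\sigma_V = \Theta(1/n)$, so choosing $n \ge 4/(3\epsilon)$ (or any slightly larger integer) guarantees $D \in \mathbb{D}_\psi(G)$ for every $\psi \ge \epsilon$. Inductively, after $\{v_0, \ldots, v_{i-1}\}$ has been removed, the neighborhood of $s_i$ reduces to the singleton $\{v_i\}$, and $(\{s_i\}, \{v_i\})$ is valid because $\pi_{s_i} - \sigma_{v_i} = \tau/3 > 0$; being a single-validator attack, it is also trivially stable (the only valid proper subattack is $(\emptyset,\emptyset)$, which automatically satisfies the stability inequality). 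The resulting cascade lies in $\mc{S}(G\searrow D)$ and eliminates every validator, so $R_\psi(S,G) = 1$ for all $\psi \ge \epsilon$.

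For the local claim, take the proper subcoalition $C := \{s_n\} \subset S$. Since $v_n$ is the unique validator whose only service is $s_n$, we have $\Gamma(C) = \{v_n\}$, and the shock $D := \{v_{n-1}\}$ satisfies $D \cap \Gamma(C) = \emptyset$, so $D \in \mathbb{D}_0(C,G)$. In $G \searrow D$ the neighborhood of $s_n$ is exactly $\{v_n\}$, and $(\{s_n\},\{v_n\})$ is a stable valid attack (again since $\pi_{s_n} > \sigma_{v_n}$) that wipes out all of $\Gamma(C)$, giving $R_0(C,G) = 1$. The main obstacle in executing this plan is the interior EigenLayer computation: the precise value of $\tau$ and the $\tau/3$ profit excess must balance to yield strict inequality simultaneously at every interior validator; once that algebra is settled, the cascade, stability, and security arguments are essentially direct verification.
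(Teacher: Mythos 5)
Your proposal is correct, but it reaches the result by a genuinely different construction than the paper. The paper takes $n$ unit-stake validators and $n/2$ services with $\pi_s = 2$, $\alpha_s = 1$, whose neighborhoods form one partition of $V$ into groups of six and another into groups of three, with the whole graph connected; the EigenLayer condition holds with equality, and after deleting any single validator the entire remaining validator set executes one big stable attack $(S, V\setminus\{v\})$, stability being the delicate part (a counting/connectivity claim showing every proper attacking coalition is overcollateralized by at least one unit of stake). You instead use a path with graded stakes $\sigma_{v_i} = 1+(n-i)\tau$ and services $s_i$ with $\pi_{s_i} = \sigma_{v_i}+\tau/3$ straddling consecutive validators, so that deleting $v_0$ triggers a domino cascade of $n$ singleton attacks $(\{s_i\},\{v_i\})$; there stability is immediate and the burden shifts to the interior EigenLayer algebra, whose gap I verified is indeed $\tfrac{2}{3}\tau\bigl(1+(n-i)\tau\bigr) > 0$, and your normalization does give $\pi_{s_1}=2$, $\sigma_{v_n}=1$, hence $\max_{(s,v)}\pi_s/\sigma_v = 2$ exactly, while $\sigma_{v_0}/\sigma_V \le 4/(3n) \le \epsilon$ for $n \ge 4/(3\epsilon)$, and the local claim with $C=\{s_n\}$, $\Gamma(C)=\{v_n\}$, $D=\{v_{n-1}\}$ goes through. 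Trade-offs: the paper's example shows the full damage can be realized by a single coordinated stable attack and satisfies the EigenLayer condition with equality, whereas your example satisfies it only strictly but additionally exhibits a cascade of length $n$ from an $O(1/n)$ shock, which nicely echoes the cascade-length theme of \cref{thm:length}. One harmless slip: $(\emptyset,\emptyset)$ is not a valid attack (validity would require $\pi_\emptyset > \sigma_\emptyset$, i.e.\ $0>0$), so your singleton attacks are stable vacuously; and even if one did count the empty pair, the required inequality reduces to $\sigma_B < \pi_A$, which is just validity, so the conclusion is unaffected.
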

\begin{proof}
    Pick any $n$ such that $6 \mid n$ and $\frac1n < \epsilon$. We let $V$ consist of $n$ validators each with stake $\sigma_v = 1$. Letting $N := n/6$, we let $S$ consist of $3N$ services each with $\pi_s := 2$ and $\alpha_s := 1$. The graph thus satisfies \cref{eq:bounded}. We now split $S$ into two disjoint sets $S_6$ and $S_3$, where $S_6$ contains $N$ of the services, and $S_3$ contains the remaining $2N$ services. We define the edges $E$ as follows. First, we draw edges such that the neighborhoods $\set{N_G\set{s}}_{s \in S_6}$ of services in $S_6$ evenly partition $V$ into groups of $6$. Next, we draw edges such that the neighborhoods $\set{N_G\set{s}}_{s \in S_3}$ evenly partition $V$ into groups of $3$, and make the entire graph $G$ a single connected component. An example of such a graph for $n = 30$ is shown in \cref{fig:gn}. The graph satisfies (\ref{eq:EL}) as each validator is adjacent to just one service from $S_6$ and one service from $S_3$, whence for any $v \in V$,
    \begin{equation}
        \sum_{s \in N_G \set{v}} \frac{\sigma_v}{\sigma_{N_G\set{s}}} \cdot \frac{\pi_s}{\alpha_s} = \frac{1}{6} \cdot 2 + \frac{1}{3} \cdot 2 = \sigma_v
    \end{equation}
    Thus, $G$ is secure by \cref{claim:EL}.
    \begin{claim}\label{clm:proper}
        For all attacking coalitions $(A, B)$ on $G$ such that $(A, B) \ne (S, V)$, $\sigma_B \ge \pi_A + 1$.
    \end{claim}
    \begin{proof}
        To see this, notice first that as $\alpha_s = 1$ for all $s \in S$, for any $A \subseteq S$, $(A, N_GA)$ is the unique stake-minimal attacking coalition on $G$ that attacks $A$. Writing $A_3 := A \cap S_3$ and $A_6 := A \cap S_6$, we must further have that
        \begin{equation}
            \sigma_{N_GA} = 3\abs{A_3} + \abs{N_GA_6 \setminus N_GA_3}
        \end{equation}
        and that
        \begin{equation}
            \pi_A = 2\abs{A} = 2\pa{\abs{A_3} + \abs{A_6}}
        \end{equation}
        It follows that
        \begin{align}\label{eq:bomb}
            \sigma_{N_GA} - \pi_A = \abs{A_3} - 2\abs{A_6} + \abs{N_GA_6 \setminus N_GA_3}
        \end{align}
        As all validators $v$ have $\sigma_v = 1$, it suffices to show that $\sigma_{N_GA} > \pi_A$. Suppose for contradiction that $\sigma_{N_GA} \le \pi_A$. We must then have by \cref{eq:bomb} that $\abs{A_3} \le 2 \abs{A_6}$ else $\sigma_{N_GA} > \pi_A$. Furthermore, $2 \abs{A_6} \le \abs{A_3}$ since $\abs{N_GA_6 \setminus N_GA_3} \ge 3\pa{2\abs{A_6} - \abs{A_3}}$. It follows that we must have $\abs{A_3} = 2\abs{A_6}$ and $\abs{N_GA_6 \setminus N_GA_3} = 0$. However, if both of these conditions occurred simultaneously, then $(A, N_GA)$ must be disconnected from the rest of the graph. As $G$ forms a single connected component, and $N_GA$ is a proper subset of $V$, it follows that $\sigma_{N_GA} > \pi_A$ as desired.
    \end{proof}
    
    Next, pick any $v \in V$, and consider the initial shock $D = \set{v}$. Notice that for any $C$ such that $v \not \in \Gamma(C)$, $D \in \mathbb D_0(C, G)$. Further, as $\epsilon > \frac1n$, $D \in \mathbb D_\psi(S, G)$ for any $\psi \ge \epsilon$. To finish the result, it suffices to show that $(S, V \setminus \set{v})$ is a stable attack on $G \searrow D$. To see this, note that it is clearly an attacking coalition, and 
    \begin{equation}
            \pi_S = n > n - 1 = \sigma_{V \setminus \set{v}}
    \end{equation}
    whence it is a valid attack on $G \searrow D$. To see that it is a stable attack, it suffices to show that there is no $A \subseteq S$ and proper $B \subset V\setminus \set{v}$ such that $(A, B)$ is a valid attack. It further suffices to consider proper $A \subset S$ as $V \setminus \set{v}$ is unique a subset of validators $B$ such that $(S, B)$ is an attacking coalition on $G \searrow D$. The result now follows from \cref{clm:proper} as for any such attacking coalitions $(A, B)$ on $G \searrow D$,
    \begin{equation}
        \sigma_{B \cup \set{v}} \ge \pi_A + 1 \implies \sigma_{B} \ge \pi_A 
    \end{equation}
    whence $(A, B)$ is not a valid attack on $G \searrow D$ as desired.
    \begin{figure}[t!]
        \centering
        \includegraphics[scale=0.3]{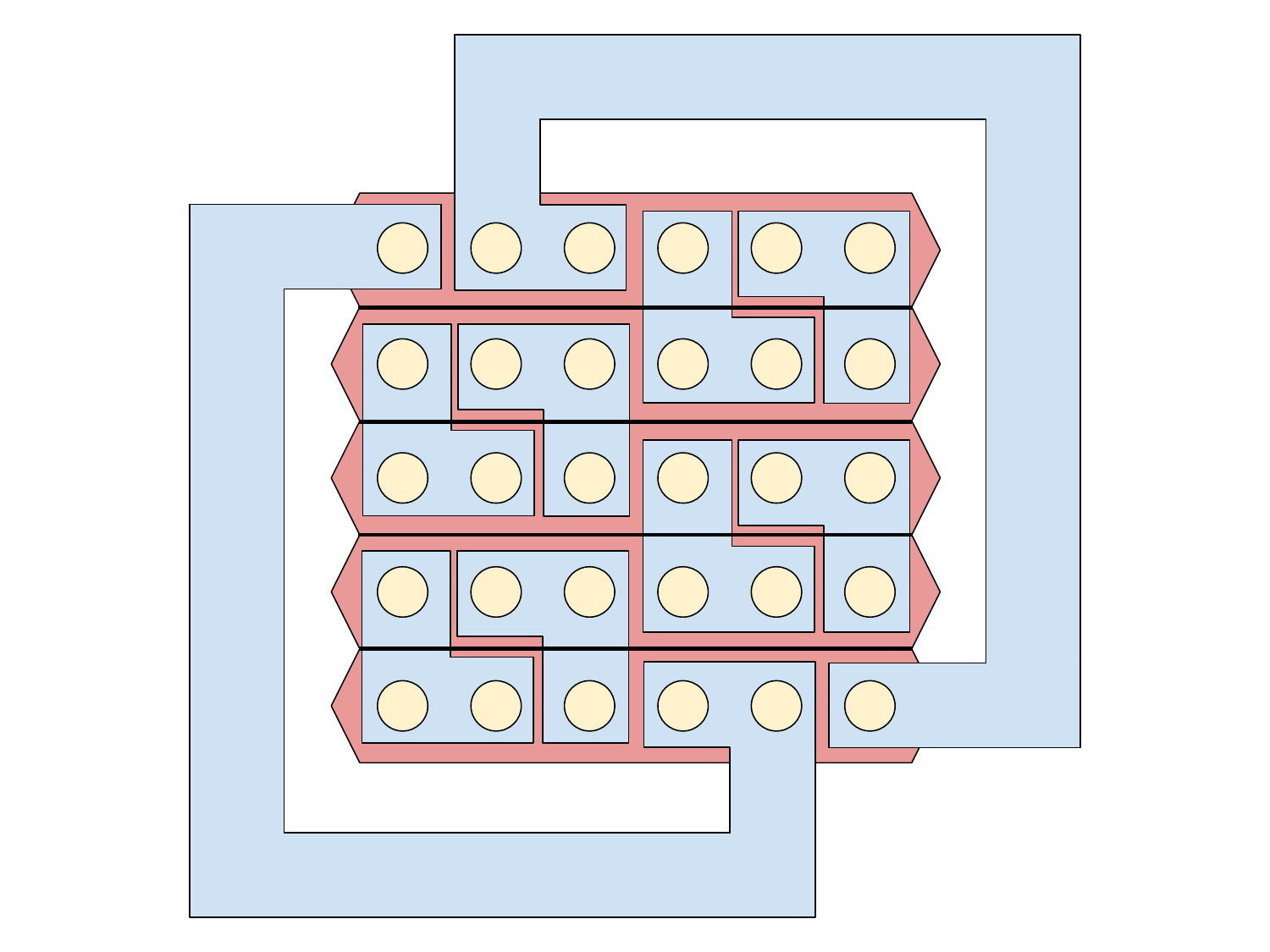}
        \caption{\textbf{The graph $G$ for $n = 30$.} Here, each circle denotes a validator. The red services constitute $S_6$, and the blue services constitute $S_3$. Observe that the graph forms a single connected component.}
        \label{fig:gn}
    \end{figure}
\end{proof}
\end{document}